\documentclass[ijoo,nonblindrev]{informs-ijoo}

\OneAndAHalfSpacedXI


\usepackage{natbib}
\usepackage{optidef}
 \bibpunct[, ]{(}{)}{,}{a}{}{,}%
\usepackage{algorithm,url,lipsum,changepage}
\usepackage[noend]{algpseudocode}
\usepackage{fullwidth}
\usepackage{enumitem}
\usepackage{pifont}
\usepackage{verbatim}
\usepackage{amsmath}
\usepackage{amssymb}
\usepackage{mathtools}

\DeclarePairedDelimiter\floor{\lfloor}{\rfloor}
\TheoremsNumberedThrough     
\ECRepeatTheorems

\EquationsNumberedThrough    

\MANUSCRIPTNO{}
\newtheorem{lem}{Lemma}

\newtheorem{thm}[lem]{Theorem}

\begin{document}


\RUNAUTHOR{Udwani}

\RUNTITLE{Maximizing Multiple Monotone Submodular Functions}

\TITLE{Multi-objective Maximization of Monotone Submodular Functions  with Cardinality Constraint}

\ARTICLEAUTHORS{%
\AUTHOR{Rajan Udwani}
\AFF{UC Berkeley, IEOR, \EMAIL{rudwani@berkeley.edu}} 
} 

\ABSTRACT{%
	We consider the problem of multi-objective maximization of monotone submodular functions subject to cardinality constraint, often formulated as $\max_{|A|=k}\min_{i\in\{1,\dots,m\}}f_i(A)$. While it is widely known that greedy methods work well for a single objective, the problem becomes much harder with multiple objectives. In fact, Krause et al.\ (2008) showed that when the number of objectives $m$ grows as the cardinality $k$ i.e., $m=\Omega(k)$, the problem is inapproximable (unless $P=NP$). On the other hand, when $m$ is constant Chekuri et al.\ (2010) showed a randomized $(1-1/e)-\varepsilon$ approximation with runtime (number of queries to function oracle) the scales as $n^{m/\varepsilon^3}$. 

We focus on finding a fast algorithm that has (asymptotic) approximation guarantees even when $m$ is super constant. We first modify the algorithm of Chekuri et al.\ (2010) to achieve a $(1-1/e)$ approximation for $m=o(\frac{k}{\log^3 k})$. This demonstrates a steep transition from constant factor approximability to inapproximability around $m=\Omega(k)$. Then using Multiplicative-Weight-Updates (MWU), we find a much faster $\tilde{O}(n/\delta^3)$ time asymptotic $(1-1/e)^2-\delta$ approximation. While the above results are all randomized, we also give a simple deterministic $(1-1/e)-\varepsilon$ approximation with runtime $kn^{m/\varepsilon^4}.$ 
Finally, we run synthetic experiments using Kronecker graphs and find that our MWU inspired heuristic outperforms existing heuristics.
}%


\KEYWORDS{multiple objectives, monotone submodular functions, approximation algorithms} 

\maketitle

%



\section{Introduction}
Many well known objectives in combinatorial optimization exhibit two common properties: the marginal value of any given element is non-negative and it decreases as more and more elements are selected. The notions of submodularity and monotonicity \footnote{A set function $f:2^N\rightarrow \mathbb{R}$ on the ground set $N$ 
	is called submodular when
	$f(A+a)-f(A)\leq f(B+a)-f(B) \text{  for all } B\subseteq A\subseteq N \text{ and } a\in N\setminus A.$. The function is monotone if $f(B)\leq f(A) \text{ for all } B\subseteq A$. W.l.o.g., assume $f(\emptyset)=0$. Combined with monotonicity this implies non-negativity. } capture these properties, resulting in the appearance of constrained monotone submodular maximization in a wide and diverse array of modern applications in machine learning and optimization. This includes feature selection (\cite{motive1,feature1}), network monitoring (\cite{sense3}), news article recommendation (\cite{news}), sensor placement and information gathering (\cite{sense0,sense1,sense2,sense4}), viral marketing and influence maximization (\cite{kempeinflu,influ}), document summarization (\cite{docsum}) and crowd teaching (\cite{crowd}).  

In this paper, we are interested in scenarios where multiple objectives, all monotone submodular, need to be simultaneously maximized subject to a cardinality constraint. This problem has an established line of work in both machine learning \citep{main}  and the theory community \citep{swap}. As an example application, in robust experimental design one often seeks to maximize a function $f_{\theta}$, which is monotone submodular for every value of $\theta$. The function is very sensitive to the choice of $\theta$ but the parameter is unknown a priori and estimated from data. Therefore, one possible approach to finding a robust solution is to maximize the function $min_{\theta\in \Theta} f_{\theta}(.)$, where $\Theta$ is a set that captures the uncertainty in $\theta$. If $\Theta$ is assumed to be a finite set of discrete values \citep{main} we have an instance of multi-objective monotone submodular maximization. More generally, we consider the following problem,
$$MO_1: \max_{A\subseteq N, |A|\leq k}\, 
\min_{i\in\{1,2,\dots,m\}} f_i(A),$$
where $f_i(.)$ is monotone submodular for every $i\in\{1,\dots,m\}$. The problem also has an alternative formulation due to \cite{swap,swapfocs}, which we discuss later. Broadly speaking, there are two ways in which this framework has been applied --

\noindent\textbf{When there are several natural criteria that need to be simultaneously optimized: }such as in network monitoring, sensor placement and information gathering \citep{sense0,sense3,sense4,main}. For example in the problem of intrusion detection \citep{sense0}, one usually wants to maximize the likelihood of detection while also minimizing the time until intrusion is detected and the population affected by intrusion. The first objective is often monotone submodular and the latter objectives are monotonically decreasing supermodular functions \citep{sense3,sense4}. Therefore, the problem is often formulated as an instance of cardinality constrained maximization with a small number of submodular objectives.  

\noindent \textbf {When looking for solutions robust to the uncertainty in objective: }such as in feature selection \citep{main,ml}, variable selection and experimental design \citep{main}, robust influence maximization \citep{influ}. In these cases, there is often inherently just a single submodular objective which is highly prone to uncertainty either due to dependence on a parameter that is estimated from data, or due to multiple possible scenarios that each give rise to a different objective. Therefore, one often seeks to optimize over the worst case realization of the uncertain objective, resulting in an instance of multi-objective submodular maximization. 

In some applications the number of objectives is given by the problem structure and can be larger even than the cardinality parameter. However, in applications such as robust influence maximization, variable selection and experimental design, the number of objectives is a design choice that trades off optimality with robustness.

\subsection{Related Work}
The problem of maximizing a monotone submodular function subject to a cardinality constraint, $$P_0:= \max_{A\subseteq N, |A|\leq k} f(A),$$ goes back to the work of \cite{nem,nem1}, where they showed that the greedy algorithm gives a guarantee of $(1-1/e)$ and this is best possible in the value-oracle model. Later, \cite{hard} showed that this is also the best possible approximation unless P=NP. While this settled the hardness and approximability of the problem, finding faster approximations remained an open line of inquiry. Notably, \cite{fast} found a faster algorithm for $P_0$ that improved the quadratic $O(nk)$ query complexity of the classical greedy algorithm to nearly linear complexity by trading off on the approximation guarantee. Subsequently, \cite{lazyfast} found an even faster randomized algorithm. 

For the more general problem $\max_{A\in\mathcal{I}} f(A)$, where $\mathcal{I}$ is the collection of independent sets of a matroid; \cite{vond1,welf} in a breakthrough, achieved a $(1-1/e)$ approximation by (approximately) maximizing the \textit{multilinear extension} of submodular functions, followed by suitable rounding. 
Based on this framework, tremendous progress was made over the last decade for a variety of different settings (\cite{vond1,welf,multiuni,hard1,vond,swap}). 

In the multi-objective setting, \cite{main} amalgamated various applications and formally introduced the following problem,
$$P_1=\max_{A\subseteq N, |A|\leq k}\, 
\min_{i\in\{1,2,\dots,m\}} f_i(A),$$
where $f_i(.)$ is monotone submodular for every $i$. They call this the Robust Submodular Observation Selection (RSOS) problem and show that in general the problem is inapproximable (no non-trivial approximation possible) unless $P=NP$. Consequently, they proceeded to give a bi-criterion approximation algorithm, called SATURATE, which achieves the optimal answer by violating the cardinality constraint. Note that their inapproximability result only holds when  $m=\Omega(k)$. Another bi-criterion approximation was given more recently in \cite{chen}. 

On the other hand, \cite{swap} showed 
a randomized $(1-1/e)-\varepsilon$ approximation for constant $m$ in the more general case of matroid constraint, as an application of a new technique for rounding over a matroid polytope, called \emph{swap rounding}. The runtime scales as $O(n^{m/\varepsilon^3} + mn^8)$. \footnote{\label{runtime}It may be possible to improve the $n^8$ term in the runtime to $n^2$ by leveraging the ideas in \cite{fast}. However, this does not immediately follow from existing results which are known only for maximizing a single function.} Note, \cite{swap} consider a different but equivalent formulation of the problem that stems from the influential paper on multi-objective optimization by \cite{papad}. The alternative formulation, which we review in Section \ref{prelim}, is the reason we call this a multi-objective maximization problem (same as \cite{swap}). For the special case of cardinality constraint (which will be our focus here), \cite{ipcofull} recently showed that the greedy algorithm can be generalized to achieve a \emph{deterministic} $1-1/e-\varepsilon$ approximation for the special case of bi-objective maximization. Their runtime scales as $n^{1+ 1/\varepsilon}$ and $\varepsilon\leq 1/2$. To the best of our knowledge, when $m=o(k)$ no  constant factor approximation algorithms or inapproximability results were known prior to this work. 
\subsection{Our Contributions}
Our focus here is on the regime $m=o(k)$. This setting is essential to understanding the approximability of the problem for super-constant $m$ and includes several of the applications we referred to earlier. For instance, in network monitoring and sensor placement, the number of objectives is usually a small constant \citep{main,sense3}. For robust influence maximization, the number of objectives depends on the underlying uncertainty but is often small \citep{influ}, and in settings like variable selection and experimental design \citep{main}, the number of objectives considered is a design choice. We show three algorithmic results with asymptotic approximation guarantees for $m=o(k)$. 

\textbf{1. Asymptotically optimal approximation:} We give a $(1-1/e-\varepsilon)(1-\frac{m}{k\varepsilon^3})$ approximation, which for $m=o\big(\frac{k}{\log^3 {k}}\big)$ and $\varepsilon=\min\{\frac{1}{10\ln m},\sqrt[4]{\frac{m}{k}}\}$ tends to $1-1/e$ as $k\to \infty$. The algorithm is randomized and outputs such an approximation w.h.p. Observe that this implies a steep transition around $m$, due to the inapproximability result (to within any non-trivial factor) for $m=\Omega(k)$. 

We obtain this via extending the matroid based algorithm of \cite{swap,swapfocs}, which relies on the \emph{continuous greedy} approach, resulting in a runtime of $\tilde{O}(mn^8)$. Note that there is no exponential term in the runtime, unlike the result from \cite{swap}. 
The key idea behind the result is quite simple, and relies on exploiting the fact that we are dealing with a cardinality constraint, far more structured than matroids. 

\textbf{2. Faster nearly linear time algorithm:}
In practice, $n$ can range from tens of thousands to millions (\cite{sense0,sense3}), which makes the above runtime intractable. To this end, we develop a fast $O(\frac{n}{\delta^3}\log m\log \frac{n}{\delta})$ time $(1-1/e)^2(1-m/k\varepsilon^3)-\varepsilon-\delta$ approximation. Under the same asymptotic conditions as above, the guarantee simplifies to $(1-1/e)^2-\delta$.
We achieve this via the Multiplicative-Weight-Updates (MWU) framework, which replaces the bottleneck continuous greedy process. This costs us the additional factor of $(1-1/e)$ in the guarantee but allows us to leverage the runtime improvements for $P_0$ achieved in \cite{fast,lazyfast}. 

MWU has proven to be a vital tool in the past few decades (\cite{mwu1,mwu3,mwu4,mwu8,mwu9,plotkin,arora}). Linear functions and constraints have been the primary setting of interest in these works, but recent applications have shown its usefulness when considering non-linear and in particular submodular objectives (\cite{azar,chekurimwu}). Unlike these recent applications, we instead apply the MWU framework in vein of the Plotkin-Shmoys-Tardos (PST) scheme for linear programming (\cite{plotkin}), essentially showing that the non-linearity only costs us a another factor of $(1-1/e)$ in the guarantee and yields a nearly linear time algorithm. Subsequent to a preliminary version of this work we discovered independent work by \cite{chen} where they applied the MWU framework for submodular objectives in a manner resembling the PST framework. However, they use it to derive a new bi-criterion approximation (with bounds similar to the result in \cite{main}), as opposed to a constant factor approximation.

\textbf{3. Deterministic approximation for small $m$:}
While the above results are all randomized, we also show a simple greedy based deterministic $1-1/e-\varepsilon$ approximation with runtime $kn^{m/\varepsilon^4}$. This follows by establishing an upper bound on the increase in optimal solution value as a function of cardinality $k$, which also resolves a weaker version of a conjecture posed in \cite{ipcofull}.
\\

\noindent \textbf{Corollaries and immediate implications of our results.} \\
\emph{1. Curvature optimal algorithm:} The curvature of a submodular function  $f$ is a parameter $c\in[0,1]$ such that for any set $S\subset N$ and element $i\in N\backslash S$, 
\[f(S\cup \{j\})-f(S)\geq (1-c)f(\{j\}). \]
Let $c_i$ denote the curvature for function $f_i$ and $c$ the maximum curvature $\max_{i\in[m]}c_i$. Using the curvature optimality established by \cite{curve}, our results naturally extend to yield a curvature optimal $\frac{1}{c}(1-e^{-c})$ asymptotic approximation for $m=o\big(\frac{k}{\log^3 {k}}\big)$. This, for instance, implies an asymptotically optimal algorithm for the special case where the objectives are all linear functions. 
\\~\\
\noindent \emph{2. Improved approximation for robust submodular maximization:} Given a monotone submodular function $f$, \cite{main} introduced the following robust maximization problem:
\[RO:=\max_{A\subseteq N, |A|\leq k}\, 
\min_{B\subseteq A, |B|\leq \tau} f(A\backslash B), \]
where one seeks a set $A$ of size at most $k$, that has the maximum function value after removal of any subset of up to $\tau$ elements. \cite{main} observed that this problem can be reduced to an instance of $P_1$ with $O(n^{\tau})$ monotone submodular objectives given by $f(\cdot\backslash B)$ for every $B\subset N$ of size $\tau$. Using this idea they proposed a bi-criterion approximation for the problem with runtime scaling as $n^{\tau}$. Later, \cite{ipcofull} showed that to get a nearly optimal $(1-1/e)-\varepsilon$ approximation for $RO$, it is enough to consider an instance of $P_1$ with $2^{O(\tau\log \tau)}$ functions. Consequently, the algorithm of \cite{swap} for $P_1$, previously allowed a $(1-1/e)-\varepsilon$ approximation for $RO$ with constant $\tau$. 

Our new result for $P_1$ directly extends the $(1-1/e)-\varepsilon$ approximation for $RO$ in \cite{ipcofull} to super-constant $\tau$, as long as the number of objectives $2^{O(\tau\log \tau)}$ is $o(k/\log^3 k)$. This holds in particular for $\tau=\frac{o(\log k)}{\log\log k}$. Previously, the best known approximation for super constant $\tau$ was 0.387 (\cite{ipcofull}, \cite{bogu}). 

\textbf{Outline:} We start with definitions and preliminaries in Section 2, where we also review relevant parts of the algorithm in \cite{swap} that are essential for understanding the results here. In Section 3, we state and prove the main results. 
Since the guarantees we present are asymptotic and technically converge to the constant factors indicated as $k$ becomes large, in Section 4 we test the performance of a heuristic, closely inspired by our MWU based algorithm, on Kronecker graphs \cite{kron} of various sizes and find improved performance over previous heuristics even for small $k$ and large $m$. 

\section{Preliminaries}\label{prelim}
\subsection{Definitions \& review}

We work with a ground set $N$ of $n$ elements and recall that we use $P_0$ to denote the single objective (classical) problem. \cite{nem,nem1} showed that the natural \emph{greedy algorithm} for $P_0$ achieves a guarantee of $1-1/e$ for $P_0$ and that this is best possible. The algorithm can be summarized as follows --

\emph{Starting with $\emptyset$, at each step add to the current set an element which adds the maximum marginal value until $k$ elements are chosen.} 

Formally, given set $A$ the marginal increase in value of function $f$ due to inclusion of set $X$ is, $$f(X|A)=f(A\cup X)-f(A).$$
Let $\beta(\eta)=1-\frac{1}{e^{\eta}}\in [0,1-1/e]$ for $\eta \in[0,1]$. Note that $\beta(1)=(1-1/e)$. Further, for $k'\leq k$, 
\begin{equation}
\beta(k'/k)= (1-e^{1-k'/k}/e)\geq (1-1/e)k'/k. \label{simfact}
\end{equation}
This function appears naturally in our analysis and will be useful for expressing approximation guarantees.

%
%
%
We use the notation $\textbf {x}_S$ for the support vector of a set $S$ (1 along dimension $i$ if $i\in S$ and 0 otherwise). We also use the short hand $|\textbf{x}|$ to denote the $\ell_1$ norm of a vector $\textbf {x}$ i.e.,
\[ |\textbf{x}|:=|\textbf{x}|_1. \]
 Given $f:2^{N}\to \mathbb{R}$, recall that its \emph{multilinear extension} over ${\bf x}=\{x_1,\dots,x_n\}\in [0,1]^{n}$ is defined as, $$F({\bf x})=\sum_{S\subseteq N}f(S)\prod_{i\in S}x_i\prod_{j\not\in S}(1-x_j).$$
The function can also be interpreted as the expectation of function value over sets obtained by including element $i\in N$ independently with probability $x_i, \forall i$.  Further, given two vectors ${\bf x},{\bf y}\in[0,1]^n$, let ${ \bf x}\vee \bf{y}$ denote the component wise maximum. Then we define marginals for $F$ as, $$F({\bf x}|{\bf y})=F({ \bf x}\vee {\bf y})- F({\bf y}).$$ 
$F$ acts as a natural replacement for the original function $f$ in the \emph{continuous greedy algorithm} (\cite{vond1}). Like the greedy algorithm, the \emph{continuous} version always moves in a feasible direction that best increases the value of function $F$. While evaluating the exact value of this function and its gradient is naturally hard in general, for the purpose of using this function in optimization algorithms, approximations to marginal values  obtained using random sampling suffice. For completeness, we include a more formal description in Appendix \ref{estimatingF}.


Now, we briefly discuss another formulation of the multi-objective maximization problem, call it $P_2$, introduced in \cite{swap}. In $P_2$ we are given a target value $V_i$ (positive real) with each function $f_i$ and the goal is to find a set $S^*$ of size at most $k$, such that $f_i(S^*)\geq V_i,\, \forall i\in\{1,\dots,m\}$ or certify that no $S^*$ exists. More feasibly one aims to efficiently find a set $S$ of size $k$ such that $f_i(S)\geq \alpha V_i$ for all $i$ and some factor $\alpha$, or certify that there is no set $S^*$ of size $k$ such that $f_i(S^*)\geq V_i,\, \forall i$. Observe that w.l.o.g.\ we can assume $V_i=1, \forall i$ (since we can consider functions $f_i(.)/V_i$ instead) and therefore $P_2$ is equivalent to the decision version of $P_1$: \emph{Given $t>0$, find a set $S^*$ of size at most $k$ such that $\min_{i}f_i(S^*)\geq t$, or give a certificate of infeasibility.}  

%

When considering formulation $P_2$, \emph{since we can always consider the modified submodular objectives $\min\{f_i(.),V_i\}$, we w.l.o.g.\ assume that $f_i(S)\leq V_i$ for every set $S$ and every function $f_i$.} Finally,  for both $P_1,P_2$ we use $S_k$ to denote an optimal/feasible set (optimal for $P_1$, and feasible for $P_2$) to the problem and $OPT_k$ to denote the optimal solution value for formulation $P_1$. 
We now give an overview of the algorithm from \cite{swap} which is based on $P_2$. To simplify the description we focus on cardinality constraint, even though it is designed more generally for matroid constraint. We refer to it as \textbf{Algorithm 1} and it has three stages. Recall, the algorithm runs in time $O(n^{m/\varepsilon^3}+mn^8)$. 


{\bf Stage 1:}
 Intuitively, this is a pre-processing stage with the purpose of picking a small initial set consisting of elements with 'large' marginal values, i.e. marginal value at least $\varepsilon^3 V_i$ for some function $f_i$. This is necessary for technical reasons due to the rounding procedure in Stage 3. 

Given a set $S$ of size $k$, fix a function $f_i$ and index elements in $S=\{s_1,\dots,s_k\}$ in the order in which the greedy algorithm would pick them. There are at most $1/\varepsilon^3$ elements such that $f_i(s_j|\{s_1,\dots,s_{j-1}\})\geq \varepsilon^3V_i$, since otherwise by monotonicity $f_i(S)>V_i$ (violating our w.l.o.g.\ assumption that $f_i(S)\leq V_i\, \forall i$). In fact, due to decreasing marginal values we have, $f_i(s_j|\{s_1,\dots,s_{j-1}\})<\varepsilon^3V_i$ for every $j>1/\varepsilon^3$.

Therefore, we focus on sets of size $\leq m/\varepsilon^3$ (at most $1/\varepsilon^3$ elements for each function) to find an initial set such that the remaining elements have marginal value $\leq \varepsilon^3 V_i$ for $f_i$, for every $i$. In particular, one can try all possible initial sets of this size (i.e. run subsequent stages with different starting sets), leading to the $n^{m/\varepsilon^3}$ term in the runtime. Stages 2 and 3 have runtime polynomial in $m$ (in fact Stage $3$ has runtime independent of $m$). Hence, Stage 1 is really the bottleneck. For the more general case of matroid constraint, it is not obvious at all if one can do better than brute force enumeration over all possible starting sets and still retain the approximation guarantee. However, we will show that for cardinality constraints one can easily avoid enumeration.

{\bf Stage 2:} 
Given a starting set $S$ from stage one, this stage works with the ground set $N-S$ and runs the continuous greedy algorithm. Suppose a feasible set $S_k$ exists for the problem, then for the right starting set $S_1\subset S_k$, this stage outputs a fractional point ${\bf x}(k_1)\in[0,1]^n$ with $|{\bf x}(k_1)|=k_1=k-|S_1|$ such that $F_i({\bf x}(k_1)|{\bf x}_{S_1})\geq (1-1/e) (V_i-f_i(S_1))$ for every $i$. 
However, this is computationally expensive and takes time $\tilde{O}(mn^8)$. We formally summarize this stage in the following lemma and refer the interested reader to \cite{swap} for further details (which will not be necessary for subsequent discussion).
\begin{lem}\label{congreedy}
	(\cite{swap}, Lemma 7.3) Given submodular functions $f_i$ and values $V_i$, cardinality $k$, the continuous greedy algorithm finds a point ${\bf x}\in[0,1]^{n}$ such that 
	$F_i({\bf x}(k))\geq (1-1/e)V_i, \forall i$ or outputs a certificate of infeasibility.
\end{lem}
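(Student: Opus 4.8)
The plan is to lift the single-objective continuous greedy analysis of \cite{vond1} to the vector $(F_1,\dots,F_m)$ of multilinear extensions, run over the cardinality (uniform matroid) polytope $P=\{{\bf v}\in[0,1]^n:\ |{\bf v}|\le k\}$, whose vertices are exactly the subsets of size at most $k$. We maintain a trajectory ${\bf x}(t)$, $t\in[0,1]$, with ${\bf x}(0)={\bf 0}$, and at each time $t$ move in a feasible direction ${\bf v}(t)\in P$ chosen to grow every coordinate objective simultaneously. Writing $r_i(t)=V_i-F_i({\bf x}(t))$ for the residuals (objectives with $r_i(t)\le 0$ are dropped, and stay satisfied thereafter by monotonicity), the direction is obtained by approximately solving the linear program $\max_{{\bf v}\in P}\ \min_{i:\,r_i(t)>0}\ \nabla F_i({\bf x}(t))\cdot{\bf v}/r_i(t)$. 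The gradient entries $\partial F_i/\partial x_j=F_i({\bf x}_{\{j\}}\mid {\bf x}(t))$ are not evaluated exactly but estimated to small additive error by random sampling, as in \cite{fast,swap,vond1}, and the LP is then solvable in polynomial time.

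The first key point is a \emph{feasibility dichotomy}. If a feasible set $S_k$ exists, i.e.\ $f_i(S_k)\ge V_i$ for all $i$, then ${\bf v}={\bf x}_{S_k}\in P$ certifies that the LP optimum is at least $1$: using monotonicity and the concavity of $F_i$ along nonnegative directions, $\nabla F_i({\bf x}(t))\cdot{\bf x}_{S_k}\ge F_i({\bf x}_{S_k}\mid {\bf x}(t))\ge f_i(S_k)-F_i({\bf x}(t))\ge r_i(t)$. Hence if at some step the estimated LP value drops below $1$ by more than the sampling slack, we halt and return ${\bf x}(t)$ together with the failed LP as a certificate of infeasibility --- the same chain of inequalities, applied with ${\bf x}_S$ for any $|S|\le k$, shows no such $S$ can meet all targets. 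Otherwise the chosen direction satisfies $\nabla F_i({\bf x}(t))\cdot{\bf v}(t)\ge V_i-F_i({\bf x}(t))$ for every $i$.

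The second key point is the resulting \emph{differential inequality}: advancing ${\bf x}(t)$ at unit speed in direction ${\bf v}(t)$ gives $\tfrac{d}{dt}F_i({\bf x}(t))=\nabla F_i({\bf x}(t))\cdot{\bf v}(t)\ge V_i-F_i({\bf x}(t))$, so with $F_i({\bf x}(0))=0$ a Gr\"onwall-type estimate yields $F_i({\bf x}(1))\ge(1-e^{-1})V_i$ for all $i$ at once; and since ${\bf x}(1)=\int_0^1{\bf v}(t)\,dt$ is an average of points of $P$, it lies in $[0,1]^n$ with $|{\bf x}(1)|\le k$, and any leftover budget can be spent on arbitrary coordinates (harmless by monotonicity) to reach $\ell_1$-norm exactly $k$; this is the point called ${\bf x}(k)$ in the statement. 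It remains to discretize: replace $[0,1]$ by $1/\delta$ steps of length $\delta$ and exact gradients by sampled ones. Because each $F_i$ has bounded curvature (second-order terms controlled by $\max_S f_i(S)\le V_i$), the per-step discretization loss is $O(\delta\cdot\mathrm{poly}(n))$, and Chernoff bounds show $\mathrm{poly}(n)$ samples per coordinate estimate every $F_i({\bf x}_{\{j\}}\mid {\bf x}(t))$ accurately enough with high probability; with $\delta$ and the sample count chosen as in \cite{swap}, the accumulated error is negligible, at the cost of the $\tilde O(mn^8)$ running time. The one genuinely delicate part is precisely this error bookkeeping --- pushing the additive per-step errors through the recursion, and through the infeasibility test, without eroding the $(1-1/e)$ factor; the conceptual skeleton above is a direct vector-valued lift of the classical single-objective argument.
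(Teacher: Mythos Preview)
The paper does not actually prove this lemma: it is quoted as a black box from \cite{swap} (their Lemma~7.3), with the remark that the reader is referred there for details. Your sketch is a faithful outline of precisely that argument --- the vector-valued continuous greedy of Chekuri--Vondr\'ak--Zenklusen, where at each time step a direction in the base polytope is selected by the LP $\max_{{\bf v}\in P}\min_i \nabla F_i({\bf x})\cdot{\bf v}/r_i$, the feasible/infeasible dichotomy comes from testing whether this LP has value at least $1$, and the $(1-1/e)$ factor follows from the differential inequality $\tfrac{d}{dt}F_i({\bf x}(t))\ge V_i-F_i({\bf x}(t))$ together with the discretization and sampling bookkeeping you mention. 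So there is nothing to contrast: your approach is the one the paper is citing.
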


{\bf Stage 3:}  For the right starting set $S_1$ (if one exists), Stage 2 successfully outputs a point ${\bf x}(k_1)$. 
Stage 3 now follows a random process that converts ${\bf x}(k_1)$ into a set $S_2$ of size $k_1$ such that, $S_2\in N-S_1$ and $f_i(S_1\cup S_2)\geq (1-1/e)(1-\varepsilon)V_i, \forall i$ as long as $\varepsilon$ is ``small enough". 
The rounding procedure developed in \cite{swap}, called \emph{swap rounding}, is designed to work more generally for matroid constraints. For the special case of uniform matroid that we focus on, it turns out that \emph{independently rounding} ${\bf x}(k_1)$ after an appropriate scaling suffices\footnote{An earlier version of this work \cite{neurips}, uses the more involved swap rounding technique instead.} i.e., we sample element $i$ independently w.p.\ $\eta x_i(k_1)$ for a suitably chosen $\eta>0$. In order to formalize this we introduce the following notation,
\[R({\bf x}):=\text{ random set obtained by independently sampling elements according to ${\bf x}$.}\]
Since any element $i$ is part of $R({\bf x})$ w.p.\ $x_i$, we naturally have $\mathbb{E}[f(R({\bf x}))]=F({\bf x})$. To formally show that independently rounding suffices for cardinality constraints, we need the following lemmas.
\begin{lem}\label{chern} (\cite{swap}, Theorem 1.3) 
	 Let $f$ be a monotone submodular function with the maximum function value of singletons in $[0,1]$. For any ${\bf x}\in[0,1]^n$. 
	and $\delta>0$,
	\[ \Pr[f(R({\bf x}))\leq (1-\delta)F({\bf x})] \leq e^{-F({\bf x})\delta^2/2}. \]	
\end{lem}
\begin{lem}\label{concavity}
	Given a point ${\bf x}\in[0,1]^n$ with $|{\bf x}|=k$ and a multilinear extension $F$ of a monotone submodular function, for every $k_1\leq k$, 
	$$F\Big(\frac{k_1}{k}{\bf x}\Big)\geq \frac{k_1}{k}F({\bf x}).$$
\end{lem}
\begin{proof}{Proof}
	Note that the statement is true for concave functions. Since multilinear extensions are concave in positive directions (Section 2.1 of \cite{vond1}), we have the desired.
\end{proof}

Using these lemmas we now formalize the result on rounding below.

\begin{lem}\label{swap}
{\color{black}	
	 Suppose we are given $m\geq 2$ non-negative values $\{T_i\}_{i\in[m]}$, monotone submodular functions $f_i(.)$ with the maximum value of singletons in $[0,\varepsilon^3T_i]$ for every function, and a fractional point ${\bf x}$ such that $F_i({\bf x})\geq (1-1/e) T_i$ for every $i\in[m]$. }Let $|{\bf x}|\leq k$, then for $\eta=\Big(1-\sqrt{\frac{\ln k}{k}}\Big)$ and 
	  $\varepsilon\leq \frac{1}{10 \ln m}$, we have,
	\[ \Pr\Big[\{|R(\eta {\bf x})|>k\} \vee \{\exists\ i\in[m]\ s.t.\ f_i(R(\eta{\bf x}))<(1-\varepsilon)\eta F_i({\bf x})\}\Big]< \frac{1}{k^{O(1)}}+ \frac{1}{m^{O(1)}}.\]
\end{lem}
\begin{proof}{Proof}
	 First, from Lemma \ref{concavity} we have that $F(\eta {\bf x})\geq \eta F({\bf x})$. Further, for every $i\in [m]$, function $f_i$ has singleton values in $[0,\varepsilon^3T_i]$ so using Lemma \ref{chern} for the scaled function $\frac{1}{\varepsilon^3 T_i} f_i$ and its multilinear extension $\frac{1}{\varepsilon^3 T_i} F_i$,  we have that,
	\[ \Pr[f_i(R(\eta {\bf x}))\leq(1-\varepsilon)\eta F({\bf x})]\leq e^{-\frac{\eta F({\bf x})}{2\varepsilon T_i}}< e^{-\eta\frac{1-1/e}{2\varepsilon}},\]
	where the last inequality uses the fact that $F({\bf x})> (1-1/e) T_i$. 
	Additionally, from standard Chernoff bound for Bernoulli r.v.s, we have that,
	\[\Pr\big[|R(\eta {\bf x})|\geq k\big]\leq \Pr\Big[|R(\eta {\bf x})|\geq \Big(1+\sqrt{\frac{\ln k}{k}}\Big)\eta k\Big]\leq e^{-\frac{\delta^2}{2+\delta}k}< e^{-\frac{\ln k}{3}}. \] The rest follows via union bound, noting that $5\eta(1-1/e)-1\geq 0$ for all $k$.
\end{proof}
\emph{Remark:} 
The above can be converted to a result w.h.p.\ by standard repetition. 
\subsection{Some simple heuristics}\label{standheu}
Before we present the main results, let us take a step back and examine some variants of the standard greedy algorithm. To design a greedy heuristic for multiple functions, what should the objective for greedy selection be?

One possibility is to split the selection of $k$ elements into $m$ equal parts. In part $i$, pick $k/m$ elements greedily w.r.t. function $f_i$. It is not difficult to see that this is a (tight) $\beta(k/m)$ approximation. Second, recall that the convex combination of monotone submodular functions is also monotone and submodular. Therefore, one could run the greedy algorithm on a fixed convex combination of the $m$ functions. It can be shown this does not lead to an approximation better than $1/\Theta(m)$. In fact, this is the idea behind the bi-criterion approximation in \cite{main}.   
Third, one could select elements greedily w.r.t. to the objective function $h(.)=\min_i f_i(.)$. A na\"ive implementation of this algorithm can have arbitrarily bad performance even for $m=2$ (previously observed in \cite{ipcofull}). We show later in Section \ref{resolve}, that if one greedily picks sets of size $k'$ instead of singletons at each step, for large enough $k'$ one can get arbitrarily close to $(1-1/e)$.
\section{Main Results}

%

%

\subsection{Asymptotic $(1-1/e)$ approximation for $m=o\big(\frac{k}{\log^3 {k}}\big)$}\label{step1}
We replace the enumeration in Stage 1 with a single starting set, obtained by scanning once over the ground set. The main idea is simply that for the cardinality constraint case, any starting set that fulfills the Stage 3 requirement of small marginals will be acceptable (not true for general matroids).

Before proceeding, recall that in the parameter regime $k<m$, the problem is in general inapproximable due to a reduction from the hitting set problem given by \cite{main}. Ideally, we would then like to design an approximation algorithm for the regime $k\geq m$. In our algorithm and results below, we require the slightly stronger assumption that $k > m/\varepsilon^3$. Given that we choose $\varepsilon=\min\{\frac{1}{10\ln m},\sqrt[4]{\frac{m}{k}}\}$, this translates to, \[k>\Omega(m\ln^3 m).\] 
For smaller $k$ our algorithm does not have a constant factor guarantee as we approach the inapproximability regime.

{\bf New Stage 1:}
Start with $S_1=\emptyset$ and pass over all elements once in an arbitrary order. 
{\color{black}\emph{For each element $e$, add it to $S_1$ if there exists an $i\in[m]$ such that $f_i(e|S_1)\geq \varepsilon^3\left(V_i-f_i(S_1)\right)$ and $f_i(S_1)<(1-1/e)V_i$.} When all elements have been parsed and the process terminates, let $M_1$ be the set of functions $f_i$ such that $f_i(S_1)\geq (1-1/e)V_i$.
	\begin{lem}\label{needed}
		The new stage 1 outputs a set $S_1$ of size at most $m/\varepsilon^3$. On termination, for every element $e\in N\backslash S_1$ and function $f_i\not \in M_1$ we have, $f_i(e|S_1)<\varepsilon^3\left(V_i-f_i(S_1)\right)$. 
	\end{lem}
The proof of Lemma \ref{needed} is included in Appendix \ref{appx:need}. The first half of the lemma follows by mimicking the standard analysis of greedy algorithms for submodular maximization. The second half follows by definition of new stage 1. We note that peer-reviewed versions of this paper have a different subroutine in stage 1. The current version presents a new subroutine in stage 1 and fixes an error present in all the earlier versions that was discovered after publication. 

Notice that for all functions in $M_1$ we have achieved the desired target. So from here on we ignore functions in $M_1$ and perform the remaining stages only on functions $f_i\not\in M_1$.} 

Let $k_1=k-|S_1|$ and note $k_1\geq k- m/\varepsilon^3$.
Stage 2 remains the same as Algorithm 1 and outputs a fractional point ${\bf x}(k_1)$ with $|{\bf x}(k_1)|=k_1$. While enumeration over all starting sets allowed us to find a starting set such that $F_i({\bf x}(k_1)|{\bf x}_{S_1})\geq (1-1/e) (V_i-f_i(S_1))$ for every $i$; with the new Stage 1 we will use Lemma \ref{concavity} to get the desired 
 lower bound on the marginal value of ${\bf x}(k_1)$. 

\begin{lem}\label{stage2}
	$F_i({\bf x}(k_1)|{\bf x}_{S_1})\geq \beta(1)\frac{k_1}{k} (V_i-f_i(S_1))$ for every $f_i\not\in M_1$. 
\end{lem}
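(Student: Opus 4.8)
The plan is to pass to the residual problem on the ground set $N-S_1$ and combine the continuous-greedy guarantee (Lemma \ref{congreedy}) with Corollary \ref{concavity}. First I would set up notation: define the residual functions $g_i(A):=f_i(A\mid S_1)=f_i(A\cup S_1)-f_i(S_1)$ for $A\subseteq N-S_1$, which are monotone submodular with $g_i(\emptyset)=0$, and put $V_i':=V_i-f_i(S_1)\ge 0$ (nonnegative by the standing w.l.o.g.\ assumption $f_i(S)\le V_i$). A short expectation-over-random-sets computation shows that the multilinear extension $G_i$ of $g_i$ on $N-S_1$ equals $F_i(\,\cdot\mid {\bf x}_{S_1})$ on vectors supported on $N-S_1$: both sides are $\mathbb{E}_{R\sim{\bf z}}[f_i(R\cup S_1)]-f_i(S_1)$. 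Since Stage 2 runs on $N-S_1$ and returns ${\bf x}(k_1)$ supported there with $|{\bf x}(k_1)|=k_1$, it therefore suffices to show $G_i({\bf x}(k_1))\ge\beta(1)\tfrac{k_1}{k}V_i'$ for all $i$.

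Next I would exhibit a good \emph{fractional} solution for the residual instance at budget $k_1$. Taking (a size-$k$ optimal/feasible set) $S_k$ and padding it with arbitrary elements to size exactly $k$ if necessary (which does not decrease any $f_i$), let $T:=S_k\setminus S_1\subseteq N-S_1$. Then $|T|=k-|S_k\cap S_1|$, so $k_1=k-|S_1|\le|T|\le k$, and by monotonicity $g_i(T)=f_i(S_1\cup S_k)-f_i(S_1)\ge f_i(S_k)-f_i(S_1)\ge V_i-f_i(S_1)=V_i'$. Now scale down: ${\bf y}:=\tfrac{k_1}{|T|}{\bf x}_T\in[0,1]^{N-S_1}$ has $|{\bf y}|=k_1$, and Corollary \ref{concavity} (applied with cardinality $|T|$ in the role of $k$) gives
$$G_i({\bf y})\ge\tfrac{k_1}{|T|}\,G_i({\bf x}_T)=\tfrac{k_1}{|T|}\,g_i(T)\ge\tfrac{k_1}{|T|}V_i'\ge\tfrac{k_1}{k}V_i',$$
the last step using $|T|\le k$. (If $k_1=0$ the lemma is trivial, so we may assume $k_1\ge1$ and hence $|T|\ge1$.)

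Finally, I would invoke Lemma \ref{congreedy} on the residual instance — ground set $N-S_1$, functions $g_i$, budget $k_1$, target values $\tfrac{k_1}{k}V_i'$. The fractional point ${\bf y}$ witnesses that these targets are attainable, so continuous greedy cannot output a certificate of infeasibility and must return a point ${\bf x}(k_1)$ with $G_i({\bf x}(k_1))\ge(1-1/e)\tfrac{k_1}{k}V_i'=\beta(1)\tfrac{k_1}{k}(V_i-f_i(S_1))$ for all $i$; combining with the identity $G_i({\bf x}(k_1))=F_i({\bf x}(k_1)\mid{\bf x}_{S_1})$ from the first step finishes the proof.

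I expect the one delicate point to be precisely this last step: Lemma \ref{congreedy} is phrased with an \emph{integral} certificate of infeasibility, whereas what I am actually using is that continuous greedy attains a $(1-1/e)$ fraction of the best \emph{fractional} max–min value, so a fractional witness like ${\bf y}$ already forces the success branch. This is exactly how the guarantee is established in \cite{swap} (continuous greedy optimizes the multilinear relaxation), so appealing to it at the fractional level is legitimate — but it is worth stating explicitly, since it is what lets us tolerate $S_1\not\subseteq S_k$, the very situation that distinguishes the new Stage 1 from enumeration. Everything else is routine.
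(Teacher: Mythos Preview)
Your proof is correct and follows essentially the same route as the paper: exhibit $S_k\setminus S_1$ as a witness in the residual instance, scale it down via Corollary~\ref{concavity} to obtain a fractional point of mass $k_1$ meeting the scaled targets, and then invoke the continuous-greedy guarantee (Lemma~\ref{congreedy}). Your explicit treatment of the residual multilinear extension and of the fractional-witness subtlety in applying Lemma~\ref{congreedy} is in fact more careful than the paper's own proof, which simply asserts the final inequality from Lemma~\ref{congreedy} without spelling out either point.
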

\begin{proof}{Proof}
	Recall that $S_k$ denotes a feasible solution with cardinality $k$, and let ${\bf x}_{S_k}$ denote its characteristic vector. Clearly, $|{\bf x}_{S_k \backslash S_1}|\leq k$ and $F_i({\bf x}_{S_k \backslash S_1}|{\bf x}_{S_1})=f_i(S_k | S_1)\geq (V_i-f_i(S_1))$ for very $i$. And now from Lemma \ref{concavity}, we have that there exists a point ${\bf x}'$ with $|{\bf x}'|= k_1$ such that $F_i({\bf x}'|{\bf x}_{S_1})\geq\frac{k_1}{k} F_i({\bf x}_{S_k \backslash S_1}|{\bf x}_{S_1})$ for every $i$. Finally, using Lemma \ref{congreedy} we have $F_i({\bf x}(k_1)|{\bf x}_{S_1})\geq \beta(1)F_i({\bf x}'|S_1)$, which gives the desired bound. 
\end{proof}
Stage 3 rounds ${\bf x}(k_1)$ to $S_2$ of size $k_1$, and final output is $S_1\cup S_2$. The following theorem now completes the analysis.

\begin{thm}\label{extension}
	For $\varepsilon=\min\{\frac{1}{10\ln m},\sqrt[4]{\frac{m}{k}}\}$ and $\eta=1-\sqrt{{\log k_1/k_1}}$ we have, 
	\[f_i(S_1\cup S_2)\geq (1-\varepsilon)(1-m/k\varepsilon^3)\eta \beta(1)V_i\, \forall i,\] with constant probability. For $m=o\big(k/\log^3 {k}\big)$, the multiplicative factor is asymptotically $(1-1/e)$.
\end{thm}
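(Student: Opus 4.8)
The plan is to chain together the three stages we have already analyzed and bound the multiplicative losses incurred at each one. Concretely, I would start from Lemma~\ref{stage2}, which tells us that after Stage~2 we have a fractional point ${\bf x}(k_1)$ with $F_i({\bf x}(k_1)|{\bf x}_{S_1})\geq \beta(1)\frac{k_1}{k}(V_i-f_i(S_1))$ for every $i$. Then I would invoke Lemma~\ref{swap} (swap rounding): since New Stage~1 guarantees that every remaining element $e\in N\setminus S_1$ has $f_i(e|S_1)<\epsilon^3 V_i$ for all $i$ — which is exactly the "small singleton marginals" hypothesis of Lemma~\ref{swap} applied to the residual functions $f_i(\cdot|S_1)$ — we may round ${\bf x}(k_1)$ to a set $S_2\subseteq N\setminus S_1$ of size $k_1$ with $f_i(S_1\cup S_2)-f_i(S_1)=f_i(S_2|S_1)\geq (1-\epsilon)F_i({\bf x}(k_1)|{\bf x}_{S_1})$ for all $i$ with constant probability (choosing $\gamma>1$ fixed, as in the Remark, so the failure probability $me^{-1/8\epsilon}$ is bounded below $1$; this is where the constraint $\epsilon<\frac{1}{8\ln m}$ enters).

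Combining these two bounds gives $f_i(S_1\cup S_2)\geq f_i(S_1)+(1-\epsilon)\beta(1)\frac{k_1}{k}(V_i-f_i(S_1))$. Since $(1-\epsilon)\beta(1)\frac{k_1}{k}\leq 1$, the coefficient of $f_i(S_1)$ on the right is nonnegative, so we may simply drop the $f_i(S_1)$ terms (using $f_i(S_1)\geq 0$) and conclude $f_i(S_1\cup S_2)\geq (1-\epsilon)\beta(1)\frac{k_1}{k}V_i$. Now I would substitute $k_1\geq k-m/\epsilon^3$, so $\frac{k_1}{k}\geq 1-\frac{m}{k\epsilon^3}$, yielding $f_i(S_1\cup S_2)\geq (1-\epsilon)(1-1/e)(1-m/k\epsilon^3)V_i$ for every $i$, which is the first claim. (Here I am using $\beta(1)=1-1/e$.) Translating back from the $P_2$ formulation to $P_1$ in the usual way — the set $S_k$ feasible for $P_1$ at value $OPT_k$ certifies the targets $V_i=OPT_k$, or one runs the decision procedure over a grid of thresholds — gives the approximation guarantee for $MO_1$.

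For the asymptotic statement, I would plug in $\epsilon=\min\{\frac{1}{8\ln m},\sqrt[4]{m/k}\}$ and check that each of the three factors tends to $1$, $1-1/e$, $1$ respectively as $k\to\infty$ under the hypothesis $m=o(k/\log^3 k)$. The factor $(1-\epsilon)\to 1$ because $m=o(k)$ already forces $\sqrt[4]{m/k}\to 0$, and $\frac{1}{8\ln m}\to 0$ when $m\to\infty$ (and if $m$ is bounded one argues directly). For $(1-m/k\epsilon^3)$: when the binding term is $\epsilon=\sqrt[4]{m/k}$ we get $m/(k\epsilon^3)=m/(k\cdot(m/k)^{3/4})=(m/k)^{1/4}\to 0$; when the binding term is $\epsilon=\frac{1}{8\ln m}$ — which happens precisely when $\frac{1}{8\ln m}\leq\sqrt[4]{m/k}$, i.e.\ $m\geq \Omega(k/\log^4 m)$, consistent with but slightly weaker than the stated regime — we get $m/(k\epsilon^3)=\Theta(m\log^3 m/k)$, and $m=o(k/\log^3 k)$ together with $m\leq k$ (so $\log m\leq \log k$) makes this $o(1)$. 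So the only genuinely delicate point is the case analysis on which term achieves the minimum defining $\epsilon$, and verifying that in both cases the $m=o(k/\log^3 k)$ hypothesis is exactly strong enough to kill the $m/(k\epsilon^3)$ term; everything else is routine bookkeeping.
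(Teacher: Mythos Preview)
Your proposal is correct and follows essentially the same route as the paper: combine Lemma~\ref{stage2} with Lemma~\ref{swap} to get $f_i(S_2|S_1)\geq (1-\epsilon)(1-1/e)\frac{k_1}{k}(V_i-f_i(S_1))$, absorb the $f_i(S_1)$ term, substitute $k_1\geq k-m/\epsilon^3$, and then verify the asymptotics under the stated choice of $\epsilon$. You are in fact more explicit than the paper in justifying why the $f_i(S_1)$ term can be dropped and in spelling out the two-case analysis for the asymptotic claim; the paper simply asserts that the guarantee becomes $(1-1/e)(1-h(k))$ with $h(k)\to 0$ under $m=o(k/\log^3 k)$ and additionally records the runtime of each stage.
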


%

\begin{proof}{Proof}
{\color{black}	Using Lemma \ref{needed} and Lemma \ref{stage2}, we apply Lemma \ref{swap} with values $T_i= V_i-f_i(S_1)$ for every $f_i\not\in M_1$. Consequently, $f_i(S_2|S_1)\geq (1-\varepsilon) (1-m/k\varepsilon^3) \eta \beta(1)(V_i-f_i(S_1))$ for every $f_i\not\in M_1$.} 
	Therefore, $f_i(S_1\cup S_2)\geq
	(1-\varepsilon)(1-m/k\varepsilon^3)\eta \beta(1) V_i$ for every $i\in[m]$. To refine the guarantee, we choose $\varepsilon=\min\{\frac{1}{10\ln m},\sqrt[4]{\frac{m}{k}}\}$, where the $\frac{1}{10\ln m}$ is due to Lemma \ref{swap} and the $\sqrt[4]{\frac{m}{k}}$ term is to balance $\varepsilon$ and $m/k\varepsilon^3$. The resulting guarantee becomes $(1-1/e)(1-h(m,k))$, where the function $h(m,k)\to 0$ as $k\to \infty$, so long as $m=o\big(\frac{k}{\log^3 {k}}\big)$. 
	
	 The first stage makes $O(mn)$ oracle queries, the second stage runs the continuous greedy algorithm on all functions simultaneously and makes $\tilde{O}(n^8)$ queries to each function oracle, contributing $O(mn^8)$ to the runtime. 
	 Finally, the simple independent rounding in Stage 3 takes time $O(n)$. 
\end{proof}

\noindent \textbf{Curvature optimal result:} Recall, the curvature of a submodular function  $f$ is a parameter $c\in[0,1]$ such that for any set $S\subset N$ and element $i\not\in S$, we have $f(S\cup \{j\})-f(S)\geq (1-c)f(\{j\}).$ Further, given multiple functions $f_i$, we let $c$ denote the maximum curvature $\max_{i}c_i$.

Now, from Lemma 3.1 and Theorem 3.2 in \cite{curve}, the multiplicative factor $\beta(1)$ in Lemma \ref{stage2} generalizes to $\frac{1}{c}\beta(c)$. Consequently, the asymptotic guarantee in Theorem \ref{extension} is given by $\frac{1}{c}(1-e^{-c})$ for maximum curvature $c$. This for instance, implies an asymptotic guarantee of 1 for linear functions. Furthermore, due to Theorem 4.1 in \cite{curve}, this is the optimal dependence on the curvature parameter.

\subsection{Fast, asymptotic $(1-1/e)^2-\delta$ approximation for $m=o\big(\frac{k}{\log^3 {k}}\big)$}\label{keyfast}
While the previous algorithm achieves the best possible asymptotic guarantee, it is infeasible to use in practice. The main underlying issue was our usage of the continuous greedy algorithm in Stage 2 which has runtime $\tilde{O}(mn^8)$, but the flexibility offered by continuous greedy was key to maximizing the multilinear extensions of all functions at once. 
To improve the runtime we avoid continuous greedy and find an alternative 
in Multiplicative-Weight-Updates (MWU) instead. MWU allows us to combine multiple submodular objectives together into a single submodular objective and utilize fast algorithms for $P_0$ at every step.

The algorithm consists of 3 stages as before. Stage 1 remains the same as the New Stage 1 introduced in the previous section. Let $S_1$ be the output of this stage as before.
Stage 2 is replaced with a fast MWU based subroutine that runs for $T=O(\frac{\ln m}{\delta^2})$ rounds and solves an instance of $SO$ during each round. Here $\delta$ is an artifact of MWU and manifests as a subtractive term in the approximation guarantee. The nearly linear time algorithm for $SO$, in \cite{lazyfast}, has runtime $O(n\log \frac{1}{\delta'})$ and an \emph{expected} guarantee of $(1-1/e)-\delta'$. The slightly slower, but still nearly linear time $O(\frac{n}{\delta'}\log \frac{n}{\delta'})$ \emph{thresholding} algorithm in \cite{fast}, has (the usual) deterministic guarantee of $(1-1/e)-\delta'$. Both of these are known to perform well in practice and using either would lead to a runtime of $T\times \tilde{O}(n/\delta)=\tilde{O}(\frac{n}{\delta^3})$, which is a vast improvement over the previous algorithm. 

Now, fix some algorithm $\mathcal{A}$ for $P_0$ with guarantee $\alpha$, and let $\mathcal{A}(f,k)$ denote the set it outputs given monotone submodular function $f$ and cardinality constraint $k$ as input. Note that $\alpha$ can be as large as $1-1/e$, and we have $k_1=k-|S_1|$ as before. Then the new Stage 2 is, \\
\setcounter{algorithm}{1}
\begin{algorithm}[H]
	\caption{Stage 2: MWU}
	\label{mwu2}
	\begin{algorithmic}[1]
		\State {{\bf Input:} $\delta, T=\frac{2 \ln m}{\delta^2},\lambda^1_i=1/m,\tilde{f}_i(.)=\frac{f_i(.|S_1)}{V_i-f_i(S_1)}$}
		\While{$1\leq t\leq  T$} 
		$g^t(.)=\sum_{i=1}^{m} \lambda^t_i \tilde{f}_i(.)$\\
		$ X^t=\mathcal{A}(g^t,k_1)$ \\
		$m^t_i=\tilde{f}_i(X^t)-\alpha$\\
		$\lambda^{t+1}_i=\lambda^t_i (1-\delta m^t_i)$\\
		$t=t+1$
		\EndWhile
		\State {{\bf Output:} ${\bf x}_2=\frac{1}{T}\sum_{t=1}^T X^t$}
	\end{algorithmic}
\end{algorithm}

\emph{The point ${\bf x}_2$ obtained above is rounded to a set $S_2$ in Stage 3 (which remains unchanged). The final output is $S_1\cup S_2$}. Note that with abuse of notation we used the sets $X^t$ to also denote the respective support vectors. We continue to use $X^t$ and ${\bf x}_{X^t}$ interchangeably in the below.

This application of MWU is unlike \cite{azar,chekurimwu}, where broadly speaking, the framework is applied in a novel way to determine how an individual element is picked or how a direction for movement is chosen in case of continuous greedy. In contrast, we use standard algorithms for $P_0$ and pick an entire set before changing weights. Also, \cite{chekurimwu} use MWU along with the continuous greedy framework whereas, we use MWU to replace the continuous greedy framework. Subsequent to our work we discovered a resembling application of MWU in \cite{chen}. Their application differs from Algorithm \ref{mwu2} only in minor details, but unlike our result they give a bi-criterion approximation where the output is a set $S$ of cardinality up to $k\frac{\log m}{V\varepsilon^2}$ such that $f_i(S)\geq(1-1/e-2\varepsilon) V$. 

Now, consider the following intuitive schema. We would like to find a set $X$ of size $k$ such that $f_i(X)\geq \alpha V_i$ for every $i$. While this seems hard, consider the combination $\sum_i \lambda_i f_i(.)$, which is also monotone submodular for non-negative $\lambda_i$. We can easily find a set $X_{\lambda}$ such that $\sum_i \lambda_i f_i(X_{\lambda})\geq \sum_i \lambda_i V_i$, since this is a single objective problem and we have fast approximations for $P_0$. However, for a fixed set of scalar weights $\lambda_i$, solving the $P_0$ problem instance need not give a set that has sufficient value for every individual function $f_i(.)$. This is where MWU comes into the picture. We start with uniform weights for functions, solve an instance of $P_0$ to get a set $X^1$. Then we change weights to undermine the functions for which $f_i(X^1)$ was closer to the target value and stress more on functions for which $f_i(X^1)$ was small, and repeat now with new weights. After running many rounds of this, we have a collection of sets $X^t$ for $t\in\{1,\dots,T\}$. Using tricks from standard MWU analysis (\cite{arora}) along with submodularity and monotonicity, we show that $\sum_t \frac{f_i(X^t|S_1)}{T}\gtrapprox (1-1/e)(V_i-f_i(S_1))$. Thus far, this resembles how MWU has been used in the literature for linear objectives, for instance the Plotkin-Shmoys-Tardos framework for solving LPs. However, a new issue now arises due to the non-linearity of functions $f_i$. As an example, suppose that by some coincidence ${\bf x}_2=\frac{1}{T}\sum_{t=1}^T X^t$ turns out to be a binary vector, so we easily obtain the set $S_2$ from ${\bf x}_2$. We want to lower bound $f_i(S_2|S_1)$, and while we have a good lower bound on $\sum_t \frac{f_i(X^t|S_1)}{T}$, it is unclear how the two quantities are related. More generally, we would like to show that $F_i({\bf x}_2|{\bf x}_{S_1})\geq \beta \sum_t \frac{f_i(X^t|S_1)}{T}$ and this would then give us a $\beta\alpha=\beta(1-1/e)$ approximation using Lemma \ref{swap}. Indeed, we show that $\beta\geq (1-1/e)$, resulting in a $(1-1/e)^2$ approximation. Now, we state and prove lemmas that formalize the above intuition. 


\begin{lem}\label{feasible}
	$g^t(X^t)\geq \frac{k_1}{k}\alpha\sum_i\lambda^t_i	, \forall t$. 
\end{lem}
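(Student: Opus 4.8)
The plan is to unwind the definition of $X^t$ and reduce the claim to a statement about the optimal value of the single-objective instance $\max_{|A|\le k_1} g^t(A)$. Since $X^t=\mathcal{A}(g^t,k_1)$ and $\mathcal{A}$ is an $\alpha$-approximation for $P_0$, we have $g^t(X^t)\ge \alpha\cdot\max_{|A|\le k_1} g^t(A)$ (in expectation, if $\mathcal{A}$ is randomized — this only propagates harmlessly through the rest of the analysis). So it suffices to exhibit a feasible set of size at most $k_1$ whose $g^t$-value is at least $\tfrac{k_1}{k}\sum_i\lambda^t_i$. I would also note that $\mathcal{A}$ may be taken to run on the ground set $N\setminus S_1$ without loss of generality, since elements of $S_1$ have zero marginal and thus do not affect $\max_{|A|\le k_1} g^t(A)$.

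The natural candidate is $S_k\setminus S_1$, where $S_k$ is the assumed feasible set of size $k$. First I would check that each $\tilde f_i$ is at least $1$ there: by monotonicity $f_i(S_1\cup S_k)\ge f_i(S_k)\ge V_i$, hence $f_i(S_k\setminus S_1\mid S_1)=f_i(S_1\cup S_k)-f_i(S_1)\ge V_i-f_i(S_1)$, so $\tilde f_i(S_k\setminus S_1)\ge 1$ and therefore $g^t(S_k\setminus S_1)=\sum_i\lambda^t_i\tilde f_i(S_k\setminus S_1)\ge\sum_i\lambda^t_i$. The only gap is that $|S_k\setminus S_1|$ can be as large as $k$, whereas we need cardinality $k_1=k-|S_1|$.

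To close this gap I would pass to a subset, using that $g^t$ is monotone submodular (a non-negative combination of the marginals $f_i(\cdot\mid S_1)$, each of which is monotone submodular). Order the elements of $S_k\setminus S_1$ as a greedy run for $g^t$ would pick them, say $e_1,\dots,e_r$ with $r\le k$; then the marginals $a_j:=g^t(e_j\mid\{e_1,\dots,e_{j-1}\})$ are non-negative and non-increasing in $j$, and $\sum_{j\le r}a_j=g^t(S_k\setminus S_1)$. Since the first $\min\{k_1,r\}$ terms of a non-increasing non-negative sequence of length $r\le k$ contribute at least a $\tfrac{k_1}{k}$ fraction of the total, the prefix $T':=\{e_1,\dots,e_{\min\{k_1,r\}}\}$ satisfies $|T'|\le k_1$ and $g^t(T')\ge\tfrac{k_1}{k}\,g^t(S_k\setminus S_1)\ge\tfrac{k_1}{k}\sum_i\lambda^t_i$. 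This is the discrete analogue of Corollary \ref{concavity}, and is precisely where the cardinality structure is exploited. Combining with the guarantee of $\mathcal{A}$ yields $g^t(X^t)\ge\alpha\cdot g^t(T')\ge\tfrac{k_1}{k}\alpha\sum_i\lambda^t_i$.

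The main point to be careful about is the cardinality bookkeeping in the last step: checking the prefix argument when $r<k_1$ (in which case $S_k\setminus S_1$ itself already works, since its $g^t$-value is $\ge\sum_i\lambda^t_i\ge\tfrac{k_1}{k}\sum_i\lambda^t_i$) and verifying the elementary inequality ``first $k_1$ of $k$ non-increasing terms $\ge\tfrac{k_1}{k}$ of the sum'' is applied correctly. Everything else is a direct substitution of definitions.
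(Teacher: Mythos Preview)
Your proposal is correct and follows essentially the same approach as the paper: both arguments reduce to exhibiting a set of size $k_1$ with $g^t$-value at least $\tfrac{k_1}{k}\sum_i\lambda^t_i$ via a greedy-prefix argument on the feasible set, then invoke the $\alpha$-guarantee of $\mathcal{A}$. The only cosmetic difference is that the paper orders $S_k$ itself (of size exactly $k$) and truncates to the first $k_1$ elements, whereas you order $S_k\setminus S_1$ and handle the case $r<k_1$ separately; since elements of $S_1$ have zero $g^t$-marginal, the two are equivalent.
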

\begin{proof}{Proof}
	 Consider the optimal set $S_k$ and note that $\sum_i \lambda^t_i \tilde{f}_i(S_k)\geq \sum_i \lambda^t_i, \forall t$. Now the function $g^t(.)=\sum_i \lambda^t_i \tilde{f}_i(.)$, being a convex combination of monotone submodular functions, is also monotone submodular. We would like to show that there exists a set $S'$ of size $k_1$ such that $g^t(S')\geq \frac{k_1}{k}\sum_i\lambda^t_i$. Then the claim follows from the fact that $\mathcal{A}$ is an $\alpha$ approximation for monotone submodular maximization with cardinality constraint. 
	
	To see the existence of such a set $S'$, greedily index the elements of $S_k$ using $g^t(.)$. Suppose that the resulting order is $\{s_1,\dots,s_k\}$, where $s_i$ is such that $g^t(s_i|\{s_1,\dots,s_{i-1}\})\geq g^t(s_j|\{s_1,\dots,s_{i-1}\})$ for every $j>i$. Then the truncated set $\{s_1,\dots,s_{k-|S_1|}\}$ has the desired property, and we are done. 
\end{proof}
%
\begin{lem}\label{frommwu}
	$$	\frac{\sum_t \tilde{f}_i(X^t)}{T}\geq \frac{k_1}{k}\alpha-\delta\, , \forall i.$$
\end{lem}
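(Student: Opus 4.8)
The plan is to carry out the textbook multiplicative-weights regret argument (in the style of Plotkin--Shmoys--Tardos and \cite{arora}), with Lemma~\ref{feasible} supplying the only problem-specific input: a per-round guarantee. First I would record the range of the quantities being updated. Since $f_i(S)\le V_i$ for every set $S$ and $f_i$ is monotone, $\tilde f_i(X^t)=\frac{f_i(X^t\mid S_1)}{V_i-f_i(S_1)}\in[0,1]$, so each $m^t_i=\tilde f_i(X^t)-\alpha$ lies in $[-\alpha,1-\alpha]\subseteq[-1,1]$ and hence $|\delta m^t_i|\le\delta\le 1/2$; in particular every weight $\lambda^t_i$ stays strictly positive throughout. (The degenerate case $f_i(S_1)=V_i$, where $\tilde f_i$ is undefined, is harmless: the $i$-th constraint is already satisfied and that function can simply be dropped.)

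Second, I would track the potential $\Phi^t=\sum_i\lambda^t_i$, which starts at $\Phi^1=\sum_i \tfrac1m=1$, and write $p^t_i=\lambda^t_i/\Phi^t$. The update rule gives $\Phi^{t+1}=\Phi^t\big(1-\delta\sum_i p^t_i m^t_i\big)$, so $1-x\le e^{-x}$ and telescoping yield the upper bound $\Phi^{T+1}\le\exp\!\big(-\delta\sum_t\sum_i p^t_i m^t_i\big)$. For the matching lower bound, fix an index $i$ and use $\Phi^{T+1}\ge\lambda^{T+1}_i=\tfrac1m\prod_t(1-\delta m^t_i)$ together with $-\ln(1-x)\le x+x^2$ (valid for $|x|\le 1/2$). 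Taking logarithms and rearranging, and recalling $m^t_i=\tilde f_i(X^t)-\alpha$ so that the $T\alpha$ terms cancel, gives the standard regret inequality
$$\sum_t \tilde f_i(X^t)\ \ge\ \sum_t\Big(\sum_j p^t_j\,\tilde f_j(X^t)\Big)\ -\ \frac{\ln m}{\delta}\ -\ \delta T.$$

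Third comes the one non-routine ingredient. By Lemma~\ref{feasible}, $g^t(X^t)=\sum_j\lambda^t_j\tilde f_j(X^t)\ge\frac{k_1}{k}\alpha\,\Phi^t$, so dividing by $\Phi^t$ shows that in every round the weighted average value $\sum_j p^t_j\tilde f_j(X^t)$ is at least $\frac{k_1}{k}\alpha$. Substituting into the display and dividing by $T$ yields $\frac1T\sum_t\tilde f_i(X^t)\ge\frac{k_1}{k}\alpha-\frac{\ln m}{\delta T}-\delta$; with $T=\frac{2\ln m}{\delta^2}$ the middle term equals $\delta/2$, and taking $\alpha=1-1/e$ (the guarantee of the $P_0$ subroutine $\mathcal A$, whose own additive error is absorbed into $\delta$) gives the claimed bound, up to a constant factor on $\delta$ that is cleaned up by a harmless rescaling of $\delta$ (or a slightly larger choice of $T$).

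There is no deep obstacle here — it is the classical MWU computation — so the work is really bookkeeping: checking that $\tilde f_i\in[0,1]$ so that the step sizes $\delta m^t_i$ are admissible, and threading the $\frac{k_1}{k}$ factor (the loss incurred by removing the $|S_1|$ elements in Stage~1) through Lemma~\ref{feasible} and the regret bound without extra slack. It is perhaps worth flagging up front that, although $f_i$ is non-linear, the MWU step sees only the scalars $\tilde f_i(X^t)$, and the ``offline benchmark'' that the non-linearity might be expected to spoil is exactly what Lemma~\ref{feasible} (via submodularity and greedy truncation) establishes.
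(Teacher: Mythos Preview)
Your proposal is correct and follows essentially the same approach as the paper: both track the potential $\Phi^t=\sum_i\lambda^t_i$, bound it above via $1-x\le e^{-x}$ and below via $\Phi^{T+1}\ge\lambda^{T+1}_i$, invoke the $\ln(1\pm x)$ inequalities to obtain the regret bound $\frac{1}{T}\sum_t M^t\le\frac{1}{T}\sum_t m^t_i+O(\delta)$, and then plug in Lemma~\ref{feasible} for the per-round lower bound $\sum_j p^t_j\tilde f_j(X^t)\ge\frac{k_1}{k}\alpha$. Your write-up is, if anything, slightly more careful than the paper's in explicitly verifying $\tilde f_i\in[0,1]$ and flagging the degenerate case $f_i(S_1)=V_i$.
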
  
\begin{proof}{Proof}
	Suppose we have,
	\begin{equation}
	\frac{\sum_t (\tilde{f}_i(X^t)-\alpha)}{T} +\delta \geq \frac{1}{T}\sum_t \sum_i \frac{\lambda^t_i}{\sum_i \lambda^t_i} (\tilde{f}_i(X^t)-\alpha),\forall i. \label{step}
	\end{equation}
	Using Lemma \ref{feasible} the RHS simplifies to,
	\begin{eqnarray*}
		\frac{1}{T} \sum_t \frac{g(X^t)}{\sum_i \lambda^t_i}-\alpha &&\geq \alpha\Big(\frac{k_1}{k}-1\Big), 
	\end{eqnarray*}
	and for every $i$,
	\begin{eqnarray*}
		\frac{\sum_t (\tilde{f}_i(X^t)-\alpha)}{T}+\delta &&\geq \alpha \Big(\frac{k_1}{k}-1\Big)\\
		\frac{\sum_t \tilde{f}_i(X^t)}{T}&&\geq \frac{k_1}{k}\alpha-\delta.
	\end{eqnarray*}
	To finish the proof we need to show (\ref{step}), and this closely resembles the analysis in Theorem 3.3 and 2.1 in \cite{arora}. We will use the potential function $\Phi^t=\sum_i \lambda^t_i$. Let $p^t_i=\lambda^t_i/\Phi^t$ and $M^t=\sum_i p^t_i m^t_i$. Then we have,
	\begin{eqnarray*}
		\Phi^{t+1}&&=\sum_i \lambda^t_i (1-\delta m^t_i)\\
		&&=\Phi^t - \delta \Phi^t\sum_i p^t_i m^t_i\\
		&&=\Phi^t(1-\delta M^t) \leq \Phi^t e^{-\delta M^t}
	\end{eqnarray*}
	After $T$ rounds, $\Phi^T\leq \Phi^1 e^{-\delta\sum_t M^t}$. Further, for every $i$,
	\begin{eqnarray*}
		&\Phi^T\geq w^T_i =\frac{1}{m} \prod_t (1-\delta m^t_i)\\
		&\ln(\Phi^1 e^{-\delta\sum_t M^t})\geq \sum_t \ln (1-\delta m^t_i)-\ln m\\
		& \delta\sum_t M^t \leq \ln m + \sum_t \ln (1-\delta m^t_i)
	\end{eqnarray*}
	Using $\ln(\frac{1}{1-\varepsilon})\leq \varepsilon+\varepsilon^2$ and $\ln (1+\varepsilon)\geq \varepsilon-\varepsilon^2$ for $\varepsilon\leq 0.5$, and with $T=\frac{2\ln m}{\delta^2}$ and $\delta< (1-1/e)$ (for a positive approximation guarantee), we have, $$\frac{\sum_t M^t}{T}\leq \delta + \frac{\sum_t m^t_i}{T},\forall i. $$
	
\end{proof}

\begin{lem}\label{key}
	Given monotone submodular function $f$, its multilinear extension $F$, sets $X^t$ for $t\in\{1,\dots,T\}$, and a point ${\bf x}=\sum_t X^t /T $, we have,	$$F({\bf x}) \geq (1-1/e)\frac{1}{T}\sum_{t=1}^T f(X^t).$$
\end{lem}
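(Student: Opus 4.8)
The plan is to grow ${\bf x}$ one set at a time while tracking $F$, and crucially to add the sets $X^t$ in a uniformly random order so that each step sees, in expectation, the average value $\mu:=\frac{1}{T}\sum_{t=1}^T f(X^t)$ rather than whatever value happens to sit in that position. Let $\pi$ be a uniform random permutation of $\{1,\dots,T\}$ and set ${\bf y}_\ell=\frac1T\sum_{j=1}^{\ell}{\bf 1}_{X^{\pi(j)}}$ for $\ell=0,\dots,T$; then ${\bf y}_0={\bf 0}$, every coordinate of ${\bf y}_\ell$ is at most $\ell/T\le 1$, and ${\bf y}_T={\bf x}$ for every $\pi$. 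Writing $b_\ell=\mathbb{E}_\pi[F({\bf y}_\ell)]$, it suffices to show $b_T\ge(1-1/e)\mu$, since $b_T=F({\bf x})$.

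The key ingredient is a ``fractional marginal'' inequality: for any ${\bf y}\in[0,1]^n$ and $S\subseteq N$ with ${\bf y}+\tfrac1T{\bf 1}_S\in[0,1]^n$,
$$F\!\left({\bf y}+\tfrac1T{\bf 1}_S\right)-F({\bf y})\ \ge\ \tfrac1T\left(f(S)-F({\bf y})\right).$$
To prove it, put ${\bf z}={\bf y}\vee{\bf 1}_S$ and ${\bf w}=(1-\tfrac1T){\bf y}+\tfrac1T{\bf z}$. Since ${\bf z}-{\bf y}\ge {\bf 0}$, concavity of the multilinear extension in nonnegative directions (cf.\ the proof of Corollary \ref{concavity}) gives $F({\bf w})\ge(1-\tfrac1T)F({\bf y})+\tfrac1T F({\bf z})$. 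A coordinatewise comparison shows ${\bf w}\le{\bf y}+\tfrac1T{\bf 1}_S$ (on $S$, $w_i=y_i+\tfrac1T(1-y_i)\le y_i+\tfrac1T$; off $S$ they coincide), so monotonicity of $F$ gives $F({\bf y}+\tfrac1T{\bf 1}_S)\ge F({\bf w})$, and monotonicity also gives $F({\bf z})=F({\bf y}\vee{\bf 1}_S)\ge F({\bf 1}_S)=f(S)$. Chaining these three facts yields the claim.

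Apply this with ${\bf y}={\bf y}_{\ell-1}$ and $S=X^{\pi(\ell)}$ (the hypothesis holds because ${\bf y}_{\ell-1}+\tfrac1T{\bf 1}_{X^{\pi(\ell)}}={\bf y}_\ell\in[0,1]^n$) and rearrange: for every $\pi$, $F({\bf y}_\ell)\ge(1-\tfrac1T)F({\bf y}_{\ell-1})+\tfrac1T f(X^{\pi(\ell)})$. Taking $\mathbb{E}_\pi$ and using that $\pi(\ell)$ is uniform on $\{1,\dots,T\}$, so $\mathbb{E}_\pi[f(X^{\pi(\ell)})]=\mu$, gives the recursion $b_\ell\ge(1-\tfrac1T)b_{\ell-1}+\tfrac1T\mu$ with $b_0=F({\bf 0})=0$. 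A routine induction yields $b_\ell\ge\mu\bigl(1-(1-\tfrac1T)^\ell\bigr)$, hence $F({\bf x})=b_T\ge\mu\bigl(1-(1-\tfrac1T)^T\bigr)\ge(1-1/e)\mu$, which is the statement.

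I expect the fractional marginal inequality to be the only real obstacle: the ``straight'' step $\tfrac1T{\bf 1}_S$ does not lie on the segment from ${\bf y}$ to ${\bf y}\vee{\bf 1}_S$ along which $F$ is concave, so one must interpose the point ${\bf w}$ and invoke monotonicity to compare the two. A tempting probabilistic proof (writing the step as an independent ``extra'' sample $S'$ and using $f(S'\mid R)\ge f(S')-f(R)$ together with Corollary \ref{concavity}) only gives $F({\bf y}+\tfrac1T{\bf 1}_S)-F({\bf y})\ge\tfrac1T f(S)-F({\bf y})$, which is too weak and propagates to a factor of $\tfrac12$ rather than $1-1/e$. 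It is also worth flagging that randomizing the order of the $X^t$ is genuinely needed: with a fixed order the same recursion only delivers $F({\bf x})\ge\tfrac1T\sum_{\ell}(1-\tfrac1T)^{T-\ell}f(X^\ell)$, whose worst case is merely $\tfrac1e\mu$.
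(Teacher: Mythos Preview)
Your proof is correct and genuinely different from the paper's. The paper gives two arguments: a one-line appeal to the concave closure, namely $F({\bf x})\ge(1-1/e)f^+({\bf x})\ge(1-1/e)\frac{1}{T}\sum_t f(X^t)$ (citing known correlation-gap bounds), and a direct argument that orders the $X^t$ so that $f(X^1)\ge\cdots\ge f(X^T)$, builds an auxiliary sampling process yielding a point ${\bf p}\le{\bf x}$, bounds the expected marginal $\mathbb{E}[\Delta^t]\ge a^t-\frac1T\sum_{j<t}\mathbb{E}[\Delta^j]$, and then solves a small LP over the $a^t$ to extract the $(1-1/e)$ factor. Your route replaces the deterministic sorting-plus-LP step by a random permutation, which makes each step contribute exactly $\mu$ in expectation and collapses the analysis to the clean recursion $b_\ell\ge(1-\tfrac1T)b_{\ell-1}+\tfrac1T\mu$; and it replaces the auxiliary sampling process by a direct ``fractional marginal'' inequality proved via concavity along $\,{\bf y}\!\to\!{\bf y}\vee{\bf 1}_S$ plus monotonicity. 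The net effect is that your argument is self-contained (no LP, no external correlation-gap lemma) and arguably slicker; the paper's sorting argument, on the other hand, is fully deterministic and yields the slightly sharper per-instance bound $\sum_\ell(1-\tfrac1T)^{T-\ell}\tfrac1T f(X^{(\ell)})$ for the sorted sequence, though both land on the same $(1-1/e)$ factor.
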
  
\begin{proof}{Proof}
Consider the concave closure of a submodular function $f$, 
$$f^+({\bf x})=\max_{{\bf \alpha}}\{\sum_{X} \alpha_Xf(X)|\sum_{X} \alpha_X X={\bf x}, \sum_{X} \alpha_X\leq 1, \alpha_X\geq 0\, \forall X\subseteq N\}.$$ Clearly, $f_i^+({\bf x})\geq \frac{\sum_t f_i(X^t)}{T}$.  So it suffices to show $F_i({\bf x}) \geq (1-1/e)f^+_i({\bf x})$, which in fact, follows from Lemmas 4 and 5 in \cite{calines}. A novel and direct proof for this statement is included in Appendix \ref{keyred}.

\end{proof}

\begin{thm}
	For $\varepsilon=\min\{\frac{1}{20\ln m},\sqrt[4]{\frac{m}{k}}\}$, the algorithm makes $O(\frac{n}{\delta^3}\log m\log \frac{n}{\delta})$ queries, and with constant probability outputs a feasible $(1-\varepsilon)(1-\frac{m}{k\varepsilon^3})(1-1/e)^2-\delta$ approximate set. Asymptotically, $(1-1/e)^2-\delta$ approximate for $m=o\big(k/\log^3 {k}\big)$.
\end{thm}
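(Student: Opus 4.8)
The plan is to follow the skeleton of the proof of Theorem~\ref{extension} almost verbatim, substituting for its Stage~2 ingredient (Lemma~\ref{stage2}, which came from continuous greedy) the composition Lemma~\ref{frommwu}~$\to$~Lemma~\ref{key}: the former lower bounds the \emph{average} of the iterates $\tilde f_i(X^t)$ produced by Algorithm~\ref{mwu2}, and the latter converts that into a lower bound on the multilinear extension $\tilde F_i$ at the averaged point $\mathbf{x}_2$, losing one extra factor of $(1-1/e)$. The New Stage~1 accounting ($k_1\ge k-m/\epsilon^3$ and small residual singletons) and the swap-rounding step (Lemma~\ref{swap}) are then reused with no change.

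\emph{Step 1: from the MWU iterates to $\mathbf{x}_2$.} I would first observe that, instantiating $\mathcal{A}$ by the deterministic thresholding algorithm of \cite{fast} (which we may take to return a set of cardinality exactly $k_1$, padding if necessary), the output $\mathbf{x}_2=\tfrac1T\sum_t X^t$ is supported on $N\setminus S_1$ and has integral $\ell_1$-norm $|\mathbf{x}_2|=k_1$, which is what Lemma~\ref{swap} needs. Lemma~\ref{frommwu} gives $\tfrac1T\sum_t\tilde f_i(X^t)\ge\tfrac{k_1}{k}(1-1/e)-\delta$ for all $i$ (any gap between the solver's $(1-1/e)-\delta'$ guarantee and the $\alpha=1-1/e$ used in Lemma~\ref{frommwu} is absorbed by running it with internal error a constant times $\delta$). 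Applying Lemma~\ref{key} to each $\tilde f_i$ at the point $\mathbf{x}_2$ yields $\tilde F_i(\mathbf{x}_2)\ge(1-1/e)\cdot\tfrac1T\sum_t\tilde f_i(X^t)\ge(1-1/e)^2\tfrac{k_1}{k}-\delta$, using $(1-1/e)<1$. Since $\tilde F_i$ is the multilinear extension of $f_i(\cdot\mid S_1)/(V_i-f_i(S_1))$ and the multilinear extension is linear in the underlying function, $\tilde F_i(\mathbf{x}_2)=F_i(\mathbf{x}_2\mid\mathbf{x}_{S_1})/(V_i-f_i(S_1))$, so $F_i(\mathbf{x}_2\mid\mathbf{x}_{S_1})\ge\big((1-1/e)^2\tfrac{k_1}{k}-\delta\big)(V_i-f_i(S_1))$.

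\emph{Step 2: rounding and assembly.} By New Stage~1, every $e\in N\setminus S_1$ has $f_i(e\mid S_1)<\epsilon^3 V_i$, so Lemma~\ref{swap} applies to the contracted monotone submodular functions $f_i(\cdot\mid S_1)$ on ground set $N\setminus S_1$ with $\epsilon<\tfrac{1}{8\ln m}$; choosing $\gamma=1$ makes $\sum_i\Pr[\text{bad}_i]<1$, so with constant probability $f_i(S_2\mid S_1)\ge(1-\epsilon)F_i(\mathbf{x}_2\mid\mathbf{x}_{S_1})$ for all $i$ at once (a w.h.p.\ version follows from $\gamma>1$ plus repetition, as in the Remark after Lemma~\ref{swap}). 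Then $f_i(S_1\cup S_2)=f_i(S_1)+f_i(S_2\mid S_1)\ge f_i(S_1)+(1-\epsilon)\big((1-1/e)^2\tfrac{k_1}{k}-\delta\big)(V_i-f_i(S_1))$, and since the multiplier of $(V_i-f_i(S_1))$ is at most $1$ while $f_i(S_1)\ge0$, this is $\ge(1-\epsilon)\big((1-1/e)^2\tfrac{k_1}{k}-\delta\big)V_i$. Substituting $\tfrac{k_1}{k}\ge1-\tfrac{m}{k\epsilon^3}$ and normalizing $V_i=1$ (as allowed for $P_2$) gives $f_i(S_1\cup S_2)\ge(1-\epsilon)(1-1/e)^2(1-\tfrac{m}{k\epsilon^3})-\delta$ for every $i$. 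The asymptotic statement is then literally the calculation in the proof of Theorem~\ref{extension}: with $\epsilon=\min\{\tfrac{1}{8\ln m},\sqrt[4]{m/k}\}$ the prefactor $(1-\epsilon)(1-\tfrac{m}{k\epsilon^3})\to1$ whenever $m=o(k/\log^3k)$, so the guarantee tends to $(1-1/e)^2-\delta$.

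\emph{Step 3: runtime, and where the difficulty lies.} New Stage~1 makes $O(mn)$ queries; Stage~2 runs $T=O(\tfrac{\ln m}{\delta^2})$ rounds, each calling $\mathcal{A}$ on $g^t$ at cost $O(\tfrac{n}{\delta}\log\tfrac{n}{\delta})$ queries, for a total of $O(\tfrac{n}{\delta^3}\log m\log\tfrac{n}{\delta})$; Stage~3 now merges only the $T$ bases constituting $\mathbf{x}_2$ (rather than the $O(nk^2)$ bases of a continuous-greedy solution), at $O(k)$ per merge, i.e.\ $O(Tk)$, which is dominated by the Stage~2 term, so the sum is the claimed bound. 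The substantive content — Lemma~\ref{key}, that averaging the iterates and then passing to the multilinear extension costs only one more $(1-1/e)$ — is already established, so the main obstacle here is not a hard inequality but getting the composition right: carrying the $V_i-f_i(S_1)$ normalizations and the move to ground set $N\setminus S_1$ correctly, guaranteeing $|\mathbf{x}_2|\in\mathbb{Z}$ so Lemma~\ref{swap} is legitimately applicable, and matching the $(1-1/e)-\delta'$ guarantee of the fast $P_0$ subroutine against the exact $\alpha=1-1/e$ assumed in Lemma~\ref{frommwu} by rescaling the error parameters.
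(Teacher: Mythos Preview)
Your proposal is correct and follows essentially the same route as the paper: combine Lemma~\ref{frommwu} with Lemma~\ref{key} to obtain $\tilde F_i(\mathbf{x}_2)\ge (1-1/e)^2\tfrac{k_1}{k}-\delta$, then invoke swap rounding (Lemma~\ref{swap}) and reuse the asymptotic calculation from Theorem~\ref{extension}; the runtime accounting (Stage~2 dominates, $T$ rounds of the $\tilde O(n/\delta)$ thresholding solver, $O(Tk)$ for rounding) is identical. If anything you are more careful than the paper in tracking the $V_i-f_i(S_1)$ normalization, the integrality of $|\mathbf{x}_2|$, and the matching of the solver's $(1-1/e)-\delta'$ guarantee to the $\alpha=1-1/e$ assumed in Lemma~\ref{frommwu}.
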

\begin{proof}{Proof}
	%
	 Using the thresholding algorithm in \cite{fast} as the subroutine $\mathcal{A}$, we combine Lemmas \ref{frommwu} \& \ref{key} with $\alpha=(1-1/e)-\delta$ to get, \[\tilde{F}_i({\bf x}_2)\geq(1-1/e) \frac{\sum_t \tilde{f}_i(X^t)}{T}\geq \frac{k_1}{k}(1-1/e)^2-2\delta\, , \forall i.\] The asymptotic result follows just as in Theorem \ref{extension}. 
	%
	For runtime, note that Stage 1 takes time $O(n)$. Stage 2 runs an instance of $\mathcal{A}(.)$, $T$ times, leading to an upper bound of $O((\frac{n}{\delta}\log \frac{n}{\delta})\times \frac{\log m}{\delta^2})=O(\frac{n}{\delta^3}\log m\log \frac{n}{\delta})$. 
	Finally, 
	rounding takes $O(n)$ time.  
	Combining all three we get a runtime of $O(\frac{n}{\delta^3}\log m\log \frac{n}{\delta})$.
\end{proof}

\subsection{Variation in optimal solution value and derandomization}\label{resolve}
Consider the problem $P_0$ with cardinality constraint $k$. Given an optimal solution $S_k$ with value $OPT_k$ for the problem, it is not difficult to see that for arbitrary $k'\leq k$, there is a subset $S_{k'}\subseteq S_k$ of size $k'$, such that $f(S_{k'})\geq \frac{k'}{k} OPT_k$. For instance, 
indexing the elements in $S_k$ using the greedy algorithm, and choosing the set given by the first $k'$ elements gives such a set. This implies $OPT_{k'}\geq \frac{k'}{k} OPT_k$, and the bound is easily seen to be tight.

This raises a natural question: Can we generalize this bound on variation of optimal solution value with varying $k$, for multi-objective maximization? A priori, this isn't completely obvious (to us) even for modular functions. In particular, note that indexing elements in order they are picked by the greedy algorithm doesn't suffice since there are many functions and we need to balance values amongst all. We show below that one can indeed derive such a bound. 

\begin{lem}\label{variation}
	Given that there exists a set $S_k$ such that $f_i(S_k)\geq V_i,\forall i$, let $\varepsilon\leq \frac{1}{16\ln m}$. For every $k'\in[m/\varepsilon^3 , k]$, there exists $S_{k'}\subseteq S_k$ of size $k'$, such that, $$f_i(S_{k'})\geq(1-\varepsilon)\Big(\frac{k'-m/\varepsilon^3}{k-m/\varepsilon^3} \Big)V_i,\forall i .$$
\end{lem}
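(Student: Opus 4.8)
The plan is to run a version of the three-stage algorithm of Section~\ref{step1} entirely inside the feasible set $S_k$, replacing the continuous-greedy Stage~2 by a single scaling of a characteristic vector.

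\emph{Step 1: carve out a small core.} Run the New Stage~1 scan of Section~\ref{step1}, but restrict the scan to the elements of $S_k$: start with $S_1=\emptyset$, pass over $S_k$ in arbitrary order, and add $e$ to $S_1$ whenever $f_i(e|S_1)\ge\epsilon^3 V_i$ for some $i$. Exactly as in Section~\ref{step1}, submodularity together with the w.l.o.g.\ assumption $f_i(S)\le V_i$ (hence $f_i(S_k)=V_i$) shows that at most $1/\epsilon^3$ elements are added on account of each function, so $s:=|S_1|\le m/\epsilon^3$, and upon termination $f_i(e|S_1)<\epsilon^3 V_i$ for every $e\in S_k\setminus S_1$ and every $i$. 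Since $k'\ge m/\epsilon^3\ge s$ and $|S_k\setminus S_1|=k-s\ge k'-s$, there is room to pick $k'-s$ further elements from $S_k\setminus S_1$.

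\emph{Step 2: scale and round.} Consider the shifted monotone submodular functions $\tilde f_i(\cdot)=f_i(\cdot|S_1)$ on the ground set $S_k\setminus S_1$; their multilinear extensions satisfy $\tilde F_i({\bf y})=F_i({\bf y}|{\bf x}_{S_1})$, and in particular $\tilde F_i({\bf x}_{S_k\setminus S_1})=f_i(S_k|S_1)\ge V_i-f_i(S_1)$. Take the point ${\bf x}=\frac{k'-s}{k-s}{\bf x}_{S_k\setminus S_1}$, which has $|{\bf x}|=k'-s\in\mathbb{Z}$; Corollary~\ref{concavity} then gives $\tilde F_i({\bf x})\ge\frac{k'-s}{k-s}(V_i-f_i(S_1))$ for every $i$. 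Because the singletons of $\tilde f_i$ on $S_k\setminus S_1$ lie in $[0,\epsilon^3 V_i]$ by Step~1, Lemma~\ref{swap} applies with $\gamma=1$ (the hypothesis $\epsilon<\frac{1}{8\ln m}$ makes $\sum_i\Pr[\cdot]<me^{-1/8\epsilon}<1$), so with positive probability, simultaneously for all $i$, there is a set $S_2\subseteq S_k\setminus S_1$ with $|S_2|=k'-s$ and $f_i(S_2|S_1)\ge(1-\epsilon)\tilde F_i({\bf x})\ge(1-\epsilon)\frac{k'-s}{k-s}(V_i-f_i(S_1))$; in particular such an $S_2$ exists. Set $S_{k'}=S_1\cup S_2\subseteq S_k$, which has size exactly $k'$.

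\emph{Step 3: charge the core to the denominator.} Finally, $f_i(S_{k'})=f_i(S_1)+f_i(S_2|S_1)\ge f_i(S_1)+(1-\epsilon)\frac{k'-s}{k-s}(V_i-f_i(S_1))$. Since $(1-\epsilon)\frac{k'-s}{k-s}\le\frac{k'-s}{k-s}\le1$, the coefficient of $f_i(S_1)$ on the right is non-negative, so the right side is at least $(1-\epsilon)\frac{k'-s}{k-s}V_i$; and since $s\mapsto\frac{k'-s}{k-s}$ is non-increasing on $[0,k)$ (its derivative equals $\frac{k'-k}{(k-s)^2}\le0$) and $0\le s\le m/\epsilon^3$, this is at least $(1-\epsilon)\frac{k'-m/\epsilon^3}{k-m/\epsilon^3}V_i$, which is the claim. (The degenerate case $k=m/\epsilon^3$ forces $k'=k$ and is covered trivially by $S_{k'}=S_k$.) I expect the only genuinely new ingredient to be this last step — recognizing that the size-$\le m/\epsilon^3$ core $S_1$ is precisely what the $m/\epsilon^3$ in the denominator pays for, via monotonicity of $\frac{k'-s}{k-s}$ together with the fact that the extra value $f_i(S_1)$ only helps; Steps~1 and~2 are direct reuses of the New Stage~1 construction, Corollary~\ref{concavity}, and Lemma~\ref{swap}.
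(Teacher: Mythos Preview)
Your proof is correct and follows essentially the same route as the paper's: restrict to $S_k$, carve out the core $S_1$ via the New Stage~1 scan, scale ${\bf x}_{S_k\setminus S_1}$ by $\frac{k'-|S_1|}{k-|S_1|}$, apply Corollary~\ref{concavity}, then invoke swap rounding (Lemma~\ref{swap}) as an existence statement. You are simply more explicit than the paper about two points it leaves implicit: the positive-probability/existence use of Lemma~\ref{swap} with $\gamma=1$, and the monotonicity of $s\mapsto\frac{k'-s}{k-s}$ needed to pass from $|S_1|$ to $m/\epsilon^3$ in the final bound.
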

\begin{proof}{Proof}
	We restrict our ground set of elements to $S_k$ and let $S_1$ be a subset of size at most $m/\varepsilon^3$ such that $f_i(e|S_1)<\varepsilon^3 V_i, \forall e\in S_k\backslash S_1 \text{ and }\forall i$ (recall, we discussed the existence of such a set in Section 2.1, Stage 1). The rest of the proof is similar to the proof of Lemma \ref{stage2}.
	Consider the point ${\bf x}=\frac{k'-|S_1|}{k-|S_1|}{\bf x}_{S_k\backslash S_1}$. Clearly, $|{\bf x}|=k'-|S_1|$, and from Lemma \ref{concavity} we have, \[F_i({\bf x}|{\bf x}_{S_1})\geq\frac{k'-|S_1|}{k-|S_1|}F_i({\bf x}_{S_k\backslash S_1}|{\bf x}_{S_1})=\frac{k'-|S_1|}{k-|S_1|}f_i(S_k\backslash S_1|S_1)\geq  \frac{k'-|S_1|}{k-|S_1|}(V_i-f_i(S_1)),\forall i.\]
	 Finally, using the concentration result for swap rounding (see Theorem 1.4 in \cite{swap}), we have the existence of a set $S_2$ of size at most $k'-|S_1|$, such that $f_i(S_1\cup S_2)\geq (1-\varepsilon)\frac{k'-|S_1|}{k-|S_1|}V_i,\forall i$. Note that using the concentration inequality due to swap rounding allows us to eliminate the additional factor $\eta$ that we get in independent rounding (see Lemma \ref{swap}), at the cost of choosing a smaller $\varepsilon$.
	
\end{proof}
{\bf Conjecture in \cite{ipcofull}}: Note that this resolves a slightly weaker version of the conjecture in \cite{ipcofull} for constant $m$. 
The original conjecture states that for constant $m$ and every $k'\geq m$, there exists a set $S$ of size $k'$, such that $f_i(S)\geq \frac{k'-\Theta(1)}{k} V_i,\forall i$. Asymptotically, both $\frac{k'-m/\varepsilon^3}{k-m/\varepsilon^3}$ and $\frac{k'-\Theta(1)}{k}$ tend to $\frac{k'}{k}$. This implies that for large enough $k'$, we can choose sets of size $k'$ ($k'$-tuples) at each step to get a deterministic (asymptotically) $(1-1/e)-\varepsilon$ approximation with runtime $O(kn^{m/\varepsilon^4})$ for the multi-objective maximization problem, when $m$ is constant (all previously known approximation algorithms, as well as the ones presented earlier, are randomized).
\begin{thm}
	For $k'=\frac{m}{\varepsilon^4}$, choosing $k'$-tuples greedily w.r.t.\ $h(.)=\min_i f_i(.)$ yields approximation guarantee $(1-1/e)(1-2\varepsilon)$ for $k\to \infty$, while making $n^{m/\varepsilon^4}$ queries.
\end{thm}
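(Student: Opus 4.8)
I would describe the algorithm as maintaining a set $A$ (initially $\emptyset$) over $r:=\lfloor k/k'\rfloor$ rounds, where each round replaces $A$ by $A\cup T$ for a size-$k'$ subset $T$ of $N\setminus A$ maximizing $h(A\cup T)=\min_i f_i(A\cup T)$; the output has size $rk'\le k$ and so is feasible. The query count is immediate: a round scans at most $\binom{n}{k'}\le n^{k'}=n^{m/\epsilon^4}$ tuples at $m$ oracle calls each, and there are $r=O(k)$ rounds. For the guarantee, the plan is to establish the usual greedy-type contraction, but on $OPT_k-h(A)$, with each round's progress supplied by Lemma~\ref{variation} applied to the \emph{residual} objectives $f_i(\cdot\mid A)$.

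In detail, for a round with current set $A$, I would let $S_k$ be optimal for $MO_1$ (so $f_i(S_k)\ge OPT_k$ for all $i$), set $g_i(\cdot):=f_i(\cdot\mid A)$ and $\tilde V_i:=\max\{OPT_k-f_i(A),0\}$, and note $g_i(S_k)\ge f_i(S_k)-f_i(A)\ge OPT_k-f_i(A)$, hence $g_i(S_k)\ge\tilde V_i\ge0$. Since $k'=m/\epsilon^4\ge m/\epsilon^3$ and $k'\le k$ for large $k$, Lemma~\ref{variation} applied to $\{g_i\}$ over ground set $S_k$ with targets $\tilde V_i$ yields $S'\subseteq S_k$, $|S'|=k'$, with $g_i(S')\ge(1-\epsilon)\theta\,\tilde V_i\ge(1-\epsilon)\theta(OPT_k-f_i(A))$ for all $i$, where $\theta:=\frac{k'-m/\epsilon^3}{k-m/\epsilon^3}\in(0,1)$. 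After extending $S'\setminus A$ to a size-$k'$ subset $S''$ of $N\setminus A$ (which does not lower any $g_i$), $S''$ is a legal candidate in this round, so by optimality of $T$,
\[
h(A\cup T)\ \ge\ \min_i\big(f_i(A)+g_i(S'')\big)\ \ge\ \min_i\big((1-\rho)f_i(A)+\rho\,OPT_k\big),\quad \rho:=(1-\epsilon)\theta\in(0,1).
\]
As $(1-\rho)f_i(A)+\rho\,OPT_k$ is increasing in $f_i(A)$ and $f_i(A)\ge h(A)$, the right side is $\ge(1-\rho)h(A)+\rho\,OPT_k$, i.e.\ $OPT_k-h(A\cup T)\le(1-\rho)(OPT_k-h(A))$.

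Telescoping over the $r$ rounds from $h(\emptyset)=0$ then gives $h(A)\ge(1-(1-\rho)^r)OPT_k\ge(1-e^{-\rho r})OPT_k$. Using $k'-m/\epsilon^3=(1-\epsilon)k'$ (the choice $k'=m/\epsilon^4$), one gets $\rho r=(1-\epsilon)^2\frac{k'r}{k-m/\epsilon^3}$, and since $k'r=k'\lfloor k/k'\rfloor\in[k-k',k]$ with $k'$ constant, $\rho r\to(1-\epsilon)^2$ as $k\to\infty$. Hence the ratio tends to at least $1-e^{-(1-\epsilon)^2}$, which, using $(1-\epsilon)^2\ge1-2\epsilon$, monotonicity of $t\mapsto1-e^{-t}$, and the concavity estimate $1-e^{-\eta}\ge(1-1/e)\eta$ from~(\ref{simfact}) with $\eta=1-2\epsilon$ (valid as $\epsilon<\tfrac1{8\ln m}\le\tfrac12$), is at least $(1-1/e)(1-2\epsilon)$. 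I expect the main obstacle to be the second paragraph: one must invoke Lemma~\ref{variation} on the conditioned functions $f_i(\cdot\mid A)$ with \emph{non-uniform} residual targets $OPT_k-f_i(A)$, and then use the monotonicity-in-$f_i(A)$ argument to make the contraction close up on $OPT_k-h(A)$ despite those targets differing across $i$; all the combinatorial content (the small-marginal core of $S_k$, the scaling via Corollary~\ref{concavity}, swap rounding) is already packaged in Lemma~\ref{variation}, leaving only the $k\to\infty$ bookkeeping when $k'\nmid k$ and the elementary final estimate.
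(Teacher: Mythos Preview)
Your proof is correct and follows the same approach as the paper: invoke Lemma~\ref{variation} on the residual functions $f_i(\cdot\mid S_j)$ at each greedy step to obtain a contraction inequality, telescope over the $\lfloor k/k'\rfloor$ rounds, and take $k\to\infty$ using $k'-m/\epsilon^3=(1-\epsilon)k'$. Your bookkeeping is in fact slightly more careful than the paper's---the paper asserts a per-function recursion $f_i(S_{j+1})\ge(1-\eta)f_i(S_j)+\eta V_i$, which does not follow directly from the greedy choice (greedy only controls $\min_i f_i$), whereas your recursion on $h(A)$ combined with the monotonicity-in-$f_i(A)$ step is the correct way to derive the contraction on $OPT_k-h(A)$.
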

\begin{proof}{Proof}
	The analysis generalizes that of the standard greedy algorithm (\cite{nem1,nem}). Let $S_j$ denote the set at the end of iteration $j$. $S_0=\emptyset$ and let the final set be $S_{\floor*{k/k'}}$. Then from Theorem \ref{variation}, we have that at step $j+1$, there is some set $X\in S_k\backslash S_{j}$ of size $k'$ such that 
	$$f_i(X|S_{j})\geq (1-\varepsilon)\frac{k'-m/\varepsilon^3}{k-m/\varepsilon^3}\big(V_i-f_i(S_j)\big), \forall i.$$ 
	To simplify presentation let $\eta=(1-\varepsilon)\frac{k'-m/\varepsilon^3}{k-m/\varepsilon^3}$ and note that $\eta\leq 1$. Further, $1/\eta\to \infty$ as $k\to \infty$ for fixed $m$ and $k'=o(k)$. Now, we have for every $i$, $f_i(S_{j+1})-(1- \eta)f_i(S_j)\geq \eta V_i$. Call this inequality $j+1$. Observe that inequality $\floor*{k/k'}$ states $f_i(S_{\floor*{k/k'}})-(1-\eta)f_i(S_{\floor*{k/k'}-1})\geq \eta V_i,\forall i$. Therefore, multiplying inequality $\floor*{k/k'}-j$ by $(1-\eta)^j$ and telescoping over $j$ we get for every $i$,
	\begin{eqnarray*}
		f_i(S_{\floor*{k/k'}})&&\geq \sum_{j=0}^{\floor*{k/k'}-1} (1-\eta)^j \eta V_i\\
		&&\geq \big(1-(1-\eta)^{\floor*{k/k'}})V_i\\
		&&\geq  \big(1-(1-\eta)^{\frac{1}{\eta} \eta\floor*{k/k'}})V_i\\
		&&\geq \beta(\eta \floor*{k/k'})V_i\quad  \geq (1-1/e)(\eta \floor*{k/k'})V_i.
	\end{eqnarray*}
	Where we used (\ref{simfact}) for the last inequality. Let $\varepsilon=\sqrt[4]{\frac{m}{k'}}$, then we have,
	\begin{eqnarray*}
		\eta \floor*{k/k'}\geq (1-\varepsilon)\frac{1-m/k'\varepsilon^3}{1-m/k\varepsilon^3} \Big(1-\frac{k'}{k}\Big)\geq\frac{\Big(1-\sqrt[4]{\frac{m}{k'}}\Big)^2}{1-\frac{1}{k}\sqrt[4]{\frac{m}{(k')^3}}}\Big(1-\frac{k'}{k}\Big)
	\end{eqnarray*} 
	As $k\to \infty$ we get the asymptotic guarantee $(1-1/e)\Big(1-\sqrt[4]{\frac{m}{k'}}\Big)^2=(1-1/e)(1-\varepsilon)^2$.
\end{proof}
\section{Experiments on Kronecker Graphs}

We choose synthetic experiments where we can control the parameters to see how the algorithm performs in various scenarios, esp.\ since we would like to test how the MWU algorithm performs for small values of $k$ and $m=\Omega(k)$. 
We work with formulation $P_1$ of the problem and consider a multi-objective version of the \textit{max-k-cover} problem on graphs. 
Random graphs for our experiments were generated using the Kronecker graph framework introduced in \cite{kron}. These graphs exhibit several natural properties and are considered a good approximation for real networks (esp. social networks \cite{influ}).  

We compare three algorithms: (i) A baseline greedy heuristic, labeled GREEDY, which focuses on one objective at a time and successively picks $k/m$ elements greedily w.r.t.\ each function (formally stated below). (ii) A bi-criterion approximation called SATURATE from \cite{main}, to the best of our knowledge this is considered state-of-the-art for the problem. (iii) We compare these algorithms to a heuristic inspired by our MWU algorithm. This heuristic differs from the algorithm discussed earlier in two ways. Firstly, we eliminate Stage 1 which was key for technical analysis but in practice makes the algorithm perform similar to GREEDY. Second, instead of simply using the the rounded set $S_2$, we output the best set out of $\{X^1,\dots,X^T\}$ and $S_2$. Also, for both SATURATE and MWU we estimate target value $t$ using binary search and consider capped functions $\min \{f_i(.),t\}$. Also, for the MWU stage, we used $\delta=0.5$ or $0.2$.

\setcounter{algorithm}{2}
\begin{algorithm}[H]
	\caption{GREEDY}
	
	\begin{algorithmic}[1]
		\State {{\bf Input:} $k,m,f_i(.)\text{ for } i\in[m]$}
		\State $S=\emptyset, i=1$
		\While{$|S|\leq k-1$}\\ 
		$S=S+ \arg\max_{x\in N-S} f_i(x|S)$\\
		$i= i+1 \mod m+1$
		\EndWhile
		\State {{\bf Output:} $S$}
	\end{algorithmic}
\end{algorithm} 
		\begin{algorithm}[H]
			\caption{SATURATE}\label{saturate}
			\begin{algorithmic}[1]
				\State Input: $k,t,f_1,\dots,f_m$ and set $A=\emptyset$
				\State $g(.)= \sum_i \min\{f_i(.),t\}$
				\While{$|A|< k$} $A=A+\argmax{x\in N-A} g(x|A)$ \EndWhile
				\vspace{-3mm}
				\State Output: $A$
			\end{algorithmic}
		\end{algorithm}
We pick Kronecker graphs of sizes $n\in \{64,512,1024\}$ with random initiator matrix \footnote{To generate a Kronecker graph one needs a small initiator matrix. Using \cite{kron} as a guideline we use random matrices of size $2\times2$, each entry chosen uniformly randomly (and independently) from $[0,1]$. Matrices with sum of entries smaller than 1 are discarded to avoid highly disconnected graphs.} and for each $n$, we test for $m\in \{10,50,100\}$. Note that each graph here represents an objective, so for a fixed $n$, we generate $m$ Kronecker graphs to get $m$ \textit{max-cover} objectives. For each setting of $n,m$ we evaluate the solution value for the heuristics as $k$ increases and show the average performance over 30 trials for each setting. 
All experiments were performed using MATLAB.	
\begin{figure}[H]
	\centering
	\includegraphics[scale=0.25]{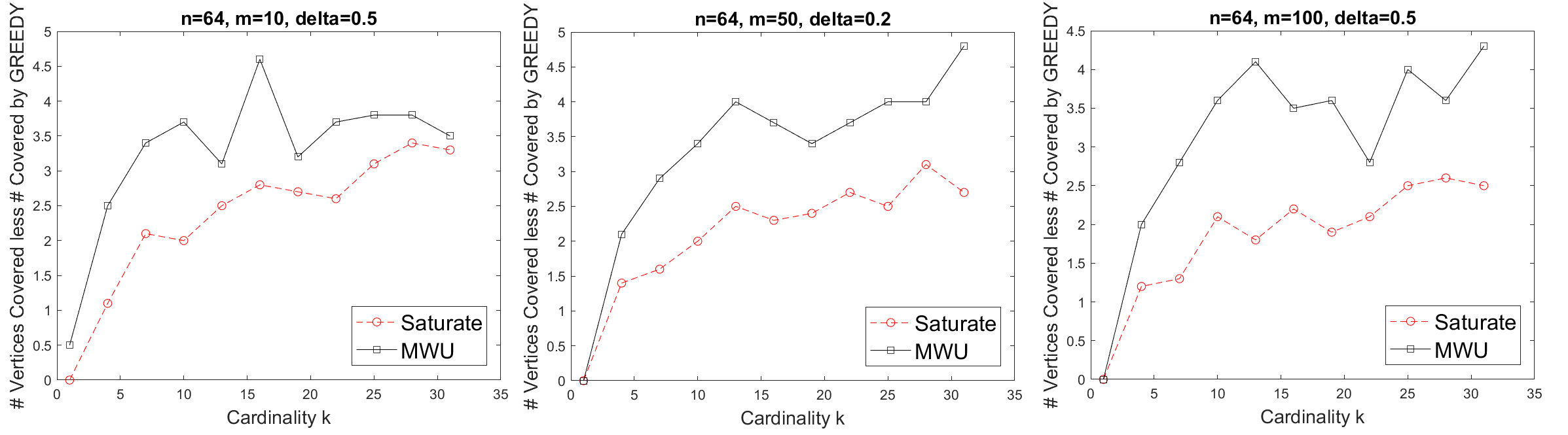}	
	\caption{Plots for graphs of size 64. Number of objectives increases from left to right. The X axis is the cardinality parameter $k$ and Y axis is difference between \# vertices covered by MWU and SATURATE minus the \# vertices covered by GREEDY for the same $k$. MWU outperforms the other algorithms in all cases, with a max.\ gain (on SATURATE) of 9.80\% for $m=10$, 12.14\% for $m=50$ and  16.12\% for $m=100$.}
\end{figure}
\begin{figure}[H]
	\centering
	\includegraphics[scale=0.4]{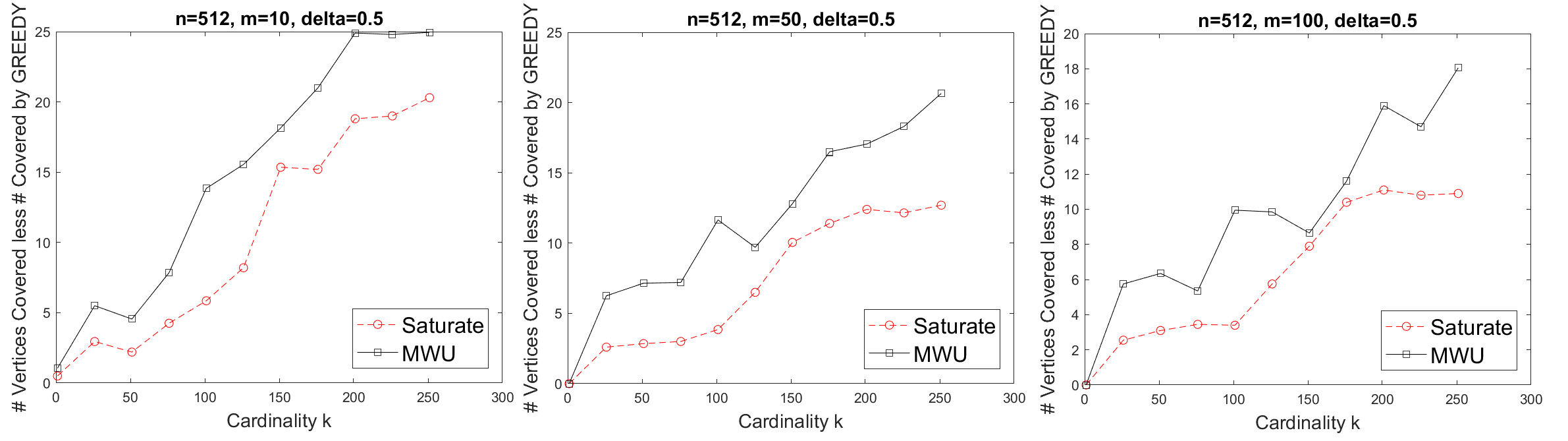}	
	\caption{Plots for graphs of size 512. MWU outperforms SATURATE in all cases with a max.\ gain (on SATURATE) of 7.95\% for $m=10$, 10.08\% for $m=50$ and  10.01\% for $m=100$.}
	
\end{figure}
\begin{figure}[H]
	\centering
	\includegraphics[scale=0.4]{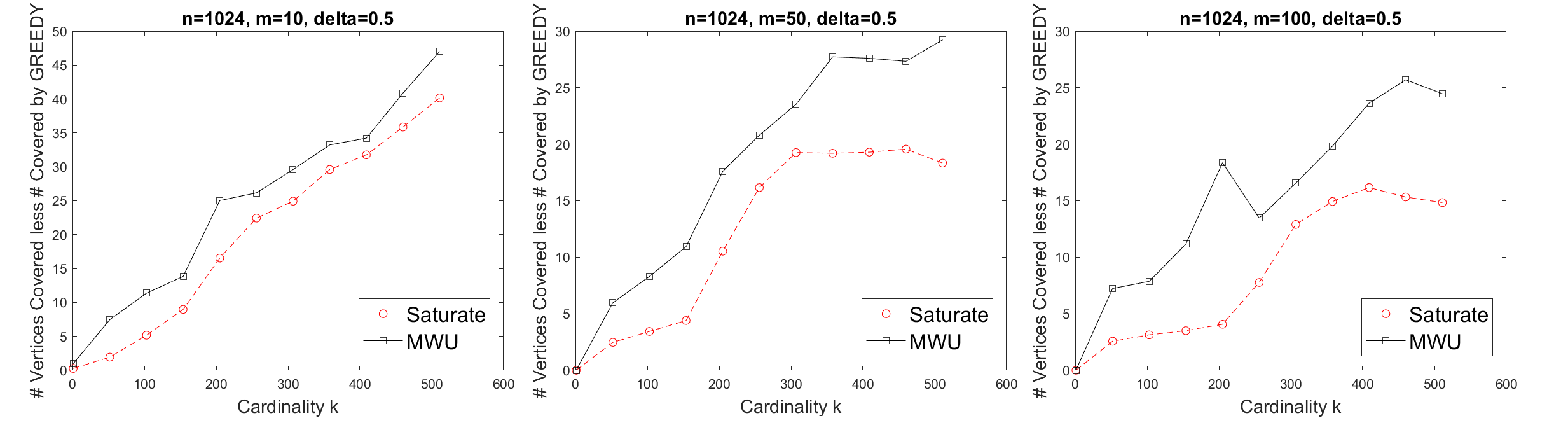}
	\caption{Plots for graphs of size 1024. MWU outperforms SATURATE in all cases, with max. gain (on SATURATE) of 6.89\% for $m=10$, 5.02\% for $m=50$ and  7.4\% for $m=100$.}
\end{figure}
%
\section{Conclusion and Open Problems}
In summary, we consider the problem of multi-objective maximization of monotone submodular functions subject to a cardinality constraint, when $m=o\big(\frac{k}{\log^3 {k}}\big)$. No polynomial time constant factor approximations or strong inapproximability results were known for the problem, though it was known that the problem is inapproximable when $m=\Omega(k)$ and admitted a nearly $1-1/e$ approximation for constant $m$. We showed that when $m=o\big(\frac{k}{\log^3 {k}}\big)$, one can indeed approach the best possible guarantee of $1-1/e$ and further also gave a nearly-linear time $(1-1/e)^2$ approximation for the same. Finally, we established a natural bound on how the optimal solution value increases with increasing cardinality $k$ of the set, leading to a simple deterministic algorithm

A natural technical question here is whether one can achieve approximations right up to $m=o(k)$. Additionally, it also of interest to ask if there are fast algorithms with guarantee closer to $1-1/e$, in contrast to the guarantee of $(1-1/e)^2$ shown here. Perhaps most intriguingly, it is unclear if similar results can also be shown for a matroid constraint.

\begin{APPENDICES}
	\section{On optimizing multilinear extensions $F$}\label{estimatingF}
	\begin{lem}(Theorem 2.2, Lemma 3.2 in \cite{vond1})
		Given a monotone submodular function $f$, let $OPT$ denote the maximum function value for a set of size $d$. To find a solution with value at least $\Big(1-1/e-O(1/d)\Big)OPT$, it suffices to be able to estimate marginals $F({\bf e}_i\mid {\bf y})$ within an additive error of $O\Big(OPT/d^2\Big)$, for any given ${\bf y}\in[0,1]^n$ and $i\in N$. 
		
		For a given pair $i,{\bf y}$, this can be achieved by 
		independently sampling $s=O(d^4 \ln n)$ random sets $R(1),\cdots R(s)$, with elements independently included in each set according to distribution ${\bf y}$. Then using a standard concentration inequality (Theorem 2.2 in \cite{vond1}), the value $\frac{1}{s}\sum_{j=1}^s \big[f(R(j)+i)-f(R(j))\big]$ is within an additive factor $O\big(OPT/d^2\big)$ of $F({\bf e}_i\mid {\bf y})$, for all such marginals computed during the execution of the algorithm w.h.p.
	\end{lem}
{\color{black}\section{Proof of Lemma \ref{needed}}\label{appx:need}
Observe that by definition of new stage 1, on termination we have for every element $e\not\in S_1$, there does not exist any function $f_i\not\in M_1$ such that $f_i(e|S_1)\geq \varepsilon^3 \left(V_i-f_i(S_1)\right).$ It remains to show that $|S_1|\leq m/\varepsilon^3$. 

Let $S_1(j)$ denote intermediate set in stage 1 after $j$ elements have been added. When the $j+1$th element, say $e$, is added we have, $f_i(e|S_1(j))\geq\varepsilon^3 \left(V_i-f_i(S_1(j))\right)$ for some $i$ with $f_i(S_1(j))<(1-1/e)V_i$. 
We show that any given function $f_i$ can help satisfy the criterion for adding a new element at most $1/\varepsilon^3$ times. This immediately implies that $|S_1|\leq m/\varepsilon^3$. To prove the claim, fix an arbitrary $f_i$ and consider the first $1/\varepsilon^3$ steps $\{j_1,\cdots,j_{1/\varepsilon^3}\}$ where an element is added with marginal value at least $\varepsilon^3 \left(V_i-f_i(S_1(j))\right)$ for function $f_i$. After adding the first such element we have, 
\begin{equation}\label{stp1}
	V_i-f_i\left(S_1(j_1)\right) \leq (1-\varepsilon^3) \left(V_i-f_i\left(S_1(j_1-1)\right)\right)\leq (1-\varepsilon^3)V_i. 
	\end{equation}
We claim that after $k$ such elements are added we have set $S_1(j_k)$ with, \[V_i-f_i\left(S_1(j_k)\right)\leq (1-\varepsilon^3)^{k} V_i.\]
The base case for $k=1$ follows from \eqref{stp1}. The rest follows via induction by noticing that after adding the $k+1$th element we have,
\[V_i-f_i\left(S_1(j_{k+1})\right)\leq (1-\varepsilon^3) \left(V_i-f_i\left(S_1({j_{k+1}-1})\right)\right)\leq(1-\varepsilon^3)\left(V_i-f_i\left(S_1({j_{k}})\right)\right). \]
 Thus, after $1/\varepsilon^3$ steps ending with the set $S_1(j_{1/\varepsilon^3})$, we have,  $V_i-f_i(S_1(j_{1/\varepsilon^3}))\leq (1-\varepsilon^3)^{1/\varepsilon^3} V_i<V_i/e$, for any $1-\varepsilon^3\in(0,1)$. Thus, $f_i(S_1(j_{1/\varepsilon^3}))>(1-1/e)V_i$. Therefore, any given $f_i$ can help satisfy the criterion for adding a new element at most $1/\varepsilon^3$ times.

}
\section{Alternative proof of Lemma \ref{key}}\label{keyred}
\begin{proof}{Proof}
	 With abuse of notation we use ${\bf x}_{X^t}$ and $X^t$ interchangeably.
	Let ${\bf x}=\sum_{t=1}^{T}  X^t/T$ and w.l.o.g., assume that sets $X^t$ are indexed such that $f(X^j)\geq f(X^{j+1})$ for every $j\geq 1$. Further, let $f(X^t)/T=a^t$ and $\sum_t a^t=A$. 
	
	Recall that $F({\bf x})$ can be viewed as the expected function value of the set obtained by independently sampling element $j$ with probability $x_j$. Instead, consider the alternative random process where starting with $t=1$, one samples each element in set $X^t$ independently with probability $1/T$. The random process runs in $T$ steps and the probability of an element $j$ being chosen at the end of the process is exactly $p_j=1-(1-1/T)^{Tx_j}$, independent of all other elements. Let ${\bf p}=(p_1,\dots,p_n)$, it follows that the expected value of the set sampled using this process is given by $F({\bf p})$. Observe that for every $j$, $p_j\leq x_j$ and therefore, $F({\bf p})\leq F({\bf x})$. Now in step $t$, suppose the newly sampled subset of $X^t$ adds marginal value $\Delta^t$. From submodularity we have, $\mathbb{E}[\Delta^1]\geq\frac{f(X^1)}{T}= a^1$ and in general, $\mathbb{E}[\Delta^t]\geq\frac{f(X^t)-\mathbb{E}[\sum_{j=1}^{t-1} \Delta_j]}{T}\geq a^t-\frac{1}{T}\sum_{j=1}^{t-1}\mathbb{E}[\Delta^j]$.
	
	To see that $\sum_t \mathbb{E}[\Delta^t]\geq (1-1/e)A$, consider a LP where the objective is to minimize $ \sum_t \gamma^t$ subject to $b^1\geq b^2\dots\geq b^T\geq 0$; $\sum b^t=A$ and $\gamma^t\geq b^t-\frac{1}{T}\sum_{j=1}^{t-1}\gamma^j$ with $\gamma^0=0$. Here $A$ is a parameter and everything else is a variable. Observe that the extreme points are characterized by $j$ such that, $\sum b^t= A$ and $b^t=b^1$ for all $t\leq j$ and $b^{j+1}=0$. For all such points, it is not difficult to see that the objective is at least $(1-1/e)A$. Therefore, we have $F({\bf p})\geq (1-1/e)A=(1-1/e)\sum_t f(X^t)/T$, as desired. 
	
	\end{proof}
\end{APPENDICES}
\ACKNOWLEDGMENT{
 Thanks to Xiaoyun Fu, Pavan Aduri, and Samik Basu for pointing out an error in Theorem \ref{extension} in previous versions of this paper. The author thanks James B. Orlin and anonymous referees for their insightful comments and feedback, and Mohit Singh for a fruitful discussion that led to further simplification of the results. Part of this work was supported by ONR grant N00014-17-1-2194.
}


	\bibliographystyle{informs2014}
	\bibliography{new}

\end{document}